\newcommand{\ignore}[1]{}
\newtheorem{theorem}{Theorem}
\newtheorem{lemma}[theorem]{Lemma}
\newtheorem{proposition}[theorem]{Proposition}
\theoremstyle{definition}
\numberwithin{equation}{section}
\numberwithin{theorem}{section}
\newcommand{\m}{\mathbb}
\title{Prophet Inequalities Require Only a Constant Number of Samples}
\author{
    Andrés Cristi \thanks{CMM, Universidad de Chile, Chile}
    \and 
    Bruno Ziliotto\thanks{CEREMADE, CNRS, Paris Dauphine University, France, and CMM, Universidad de Chile, Chile}
}
\date{}
\begin{document}

\maketitle
%\vspace{-0.9cm}
\begin{abstract}
    In a prophet inequality problem, $n$ independent random variables are presented to a gambler one by one. The gambler decides when to stop the sequence and obtains the most recent value as reward. We evaluate a stopping rule by the worst-case ratio between its expected reward and the expectation of the maximum variable. In the classic setting, the order is fixed, and the optimal ratio is known to be 1/2. Three variants of this problem have been extensively studied: the prophet-secretary model, where variables arrive in uniformly random order; the free-order model, where the gambler chooses the arrival order; and the i.i.d. model, where the distributions are all the same, rendering the arrival order irrelevant. 
    
   Most of the literature assumes that distributions are known to the gambler. Recent work has considered the question of what is achievable when the gambler has access only to a few samples per distribution. Surprisingly, in the fixed-order case, a single sample from each distribution is enough to approximate the optimal ratio, but this is not the case in any of the three variants. 
   We provide a unified proof that for all three variants of the problem,  a constant number of samples (independent of $n$) for each distribution is good enough to approximate the optimal ratios. 
   Prior to our work, this was known to be the case only in the i.i.d. variant. Previous works relied on explicitly constructing sample-based algorithms that match the best possible ratio. 
   %We show that a constant number of samples per distribution is also enough for the free-order and the prophet-secretary variants
   Remarkably, the optimal ratios for the prophet-secretary and the free-order variants with full information are still unknown. Consequently, our result requires a significantly different approach than for the classic problem and the i.i.d. variant, where the optimal ratios and the algorithms that achieve them are known. We complement our result showing that our algorithms can be implemented in polynomial time.
   
   A key ingredient in our proof is an existential result based on a minimax argument, which states that there must exist an algorithm that attains the optimal ratio and does not rely on the knowledge of the upper tail of the distributions. A second key ingredient is a refined sample-based version of a decomposition of the instance into ``small" and ``large" variables, first introduced by Liu et al. [EC'21]. 
   The universality of our approach opens avenues for generalization to other sample-based models.
   Furthermore, we uncover structural properties that might help pinpoint the optimal ratios in the full-information cases. 
\end{abstract}

\section{Introduction}

The \textit{Prophet Inequality} is a fundamental problem in optimal stopping theory, in which a gambler is successively proposed with $n$ realizations of positive independent random variables and has to pick one of them. The gambler knows in advance the order and the distribution of each variable but upon observing each realization must decide irrevocably whether to pick it. A classic result by Krengel and Sucheston~\cite{KS77} asserts that the gambler can get at least half of the expected maximum of the variables, and that this is the best possible guarantee that is independent of the variables' distributions. Remarkably, Samuel-Cahn~\cite{S84} proved this can be achieved using a very simple rule: pick any variable that is above the median of the distribution of the maximum. In the last decade, due to its connections with mechanism design and posted price mechanisms \cite{HKS07,CHM10,CFPV19}, the prophet inequality and its many variants have become an intensely studied topic and a staple framework to study online selection problems beyond worst-case analysis.

Three variants of this problem have been extensively studied.
First, the \textit{i.i.d. problem}, in which variables have i.i.d. distributions. There, the optimal ratio is $\beta \simeq 0.745$, where $1/\beta$ is the unique solution of
$\int_0^1 \frac{1}{y(1-\ln(y))+(\beta-1)} dy =1$. The upper bound was shown in \cite{HK82,kertz86}, and the lower bound in \cite{CFHO17}. 
Second, the \textit{Prophet Secretary problem}, in which variables appear in uniformly random order. Esfandiari et al.~\cite{EHL15} initiated the study of this variant, showing that the gambler can guarantee a factor of $1-1/e$, and later Ehsani et al.~\cite{EHK18} showed this can be achieved with a single-threshold rule.  Azar et al.~\cite{ACK18} slightly improved the $1-1/e$ factor by using a multi-threshold algorithm, and then Correa et al.~\cite{CSZ21} proved the optimal factor lies in $[0.669,0.732]$. The current known best upper bound is $0.724$ \cite{GMS23,BC23}, and it remains one of the most important open problems in the area to close this gap. 
Last, in the \textit{Free-order problem}, variables are ordered by the gambler. The best-known upper bound is the i.i.d. model ratio $1/\beta$. Lower bounds have been successively obtained by \cite{ACK18,CHM10,CSZ21}, and huge progress was made quite recently by Peng and Tang~\cite{PT22}, who established a lower bound of $0.724$, which was later improved to $0.725$~\cite{BC23}.  

In parallel, an exciting recent line of work has considered the more realistic case where the gambler does not have full access to the distributions, but instead observes samples from past data beforehand. Rubinstein, Wang and Weinberg~\cite{RWW20} showed that a single sample per distribution is enough to achieve the best possible factor of $1/2$ in the classic prophet inequality. Moreover, they prove that in the i.i.d. case, $O(1/\varepsilon^6)$ are enough to achieve the best possible guarantee of $0.745-\varepsilon$. Recently, Correa et al.~\cite{CCES23} showed that $O(1/\varepsilon)$ are enough to guarantee $0.745-\varepsilon$. Correa et al.~\cite{CCES22} showed that in the prophet secretary problem, one sample per distribution is enough to guarantee a factor of $0.635$. 

The focus of our work is on sample-based versions of the Prophet Secretary problem and of the Free-Order problem.
%In the first problem, $n$ independent random variables are presented to the gambler in uniformly random order, who has to pick one in an online manner. The gambler does not observe the distributions of the variables beforehand, but has limited sample access to them. 
%The \emph{sample-based free order problem} can be defined analogously. 
In both models, our main question is what fraction of the expected maximum can be guaranteed using a constant (independent of $n$) number of samples per distribution.

\subsection{Our result and technical highlights}
Let $C_S$ be the optimal fraction of the expected maximum that can be guaranteed in the prophet secretary problem. 
We prove that for any $\varepsilon>0$, it is possible to guarantee a $C_S-\varepsilon$ factor in the sample-based prophet secretary problem, using no more than $O(1/\varepsilon^5)$ samples from each distribution. The exact same result holds for the sample-based free-order problem, with the corresponding optimal ratio. 
Our proof is ``universal'', in the sense that it deals simultaneously with both models, and also works for the i.i.d. model. 

Analogous results for the prophet inequality and the i.i.d. prophet inequality rely on either converting an existing algorithm with the optimal guarantee into a sample-based one, or on constructing a sample-based algorithm and showing it matches the best-possible guarantee. Remarkably, since the best-possible guarantee for the prophet secretary problem and the free-order problem are unknown, such approaches cannot be used to show our result, and instead, we establish new properties of the problem. Moreover, the optimal algorithms for the classic and the i.i.d. variants use no more than $n$ thresholds, one for each variable. In contrast, in the random order case, the optimal algorithm uses an exponential number of thresholds, one for each variable and each possible arrival order. Similarly, the optimal algorithm for the free-order model has to choose among the exponentially many arrival orders.
\\

Before describing the main lines of the proof, let us highlight the difficulty of proving the result with an example in the prophet-secretary variant. First, consider the instance $(X_1,\dots,X_{n})$, such that $X_1,\dots,X_{n-1}$ are i.i.d. and equal to $n$ with probability $n^{-2}$, and 0 otherwise. The variable $X_n$ is deterministic, equal to $\sqrt{3}-1$. Assume that the gambler knows the distributions. This corresponds to the example in \cite{CSZ21}, where it is shown that the gambler cannot guarantee a ratio better than $\sqrt{3}-1+o(1)$, which proves that $C_S \leq \sqrt{3}-1$. 

Now, consider the following other problem: given a positive number $a$, $X_1,\dots,X_{n-1}$ are i.i.d. and equal to $a \cdot n$ with probability $n^{-2}$, and 0 otherwise. The variable $X_n$ is deterministic, equal to $\sqrt{3}-1$. The number $a$ is unknown to the gambler, who has access to a constant number of samples of each distribution. For $n$ large, with probability at least $1-O(1/n)$, the samples of $X_1,\dots,X_{n-1}$ are all equal to 0, hence uninformative. Hence, this problem is seemingly much harder than the previous one, and one may expect that the ratio guaranteed by the gambler goes way below $\sqrt{3}-1$, possibly below $C_S$. Our result shows that it is not the case: the gambler can still guarantee $C_S$. Surprisingly, one of the proof steps shows that he can even guarantee $\sqrt{3}-1$: hence, when $a$ is adversarially chosen, knowing $a$ or not knowing $a$ does not change the guarantee. 
%On the other hand, simple computations show that there does not exist an algorithm that that gives the optimal dynamic programming solution for any $a$ 
\\

Our proof consists of three main steps, which are, to some extent, important facts about the prophet-secretary and the free-order variants by themselves.

\textbf{Step 1} of our proof is to show that essentially we do not need to know the upper tails of the distributions in order to achieve the best-possible guarantee. This alleviates a heavy burden on the design of sample-based algorithms, as the upper tails potentially contribute most of the expectation of the maximum, and precisely estimating them might require an arbitrary high number of samples. The proof of this fact is based on a minimax argument: if by observing the upper tails of the distributions we can design an algorithm that guarantee the optimal constant, by choosing a randomized algorithm, we can also guarantee the optimal constant against an adversary that decides how large is the contribution of the upper tail of each distribution to the expected maximum.

\textbf{Step 2} relies on the notion of $\varepsilon$-small distributions, introduced by Liu et al.~\cite{LPP21}. A variable is $\varepsilon$-small if the probability that it is larger than zero is at most $\varepsilon$. Liu et al. show that in the prophet secretary problem, if all variables are $\varepsilon$-small, it is possible to guarantee a fraction of $0.745$ of the expected maximum, which is the best possible guarantee also if the variables are i.i.d. Our result in this step is to show that if a large proportion of the variables are $\varepsilon$-small, then we can pretend those variables are i.i.d. by losing only an $\varepsilon$ fraction of the expected maximum. The main idea is to show that for a fixed algorithm, replacing the $\varepsilon$-small variables with i.i.d. variables in a way that does not change the distribution of the maximum, we stop the sequence only earlier, and conditional on stopping with an $\varepsilon$-small variable, its expectation is almost the same as if the $\varepsilon$-small variables were i.i.d.

In \textbf{Step 3}, we show how to actually use the samples to construct the algorithm. We further divide step 3 into step 3(a) and step 3(b). In step 3(a) we show that using constantly many samples per distribution, we can split the set of variables into two sets, one containing at least $n-O\left( (1/\varepsilon)\log(1/\varepsilon) \right)$ $\varepsilon$-small variables. Because of step 2, we can replace this large set of variables with i.i.d. variables. In step 3(b), we show that using constantly many samples per distribution, we can estimate very well the distribution of the auxiliary i.i.d. variables, as well as the distribution of the constantly-many variables that are not $\varepsilon$-small, except for their upper tails.

Finally, notice these three steps alone only guarantee the existence of a sample-based algorithm. In fact, step 1 uses a minimax argument that is non-constructive. We complement this by describing in \textbf{Step 4} a procedure that finds such an algorithm and runs in polynomial time. The starting point is a linear program of exponential size that captures the algorithm from step 1. We show how to reduce the linear program to one of polynomial size by leveraging the fact that we are only interested in solving instances where all variables have supports of polynomial size, and most of them are i.i.d.

% \subsection{Additional results and perspectives}
% As highlighted in the previous subsection, the paper's contribution goes way beyond solving the sample Prophet Secretary problem. Indeed, each step of the proof gives insights on the Prophet Secretary Problem with known variables. To highlight their potential for future research, let us give informally a few by-products that we get from them. Formal statements and proofs are done in the last section of the paper.  
% \paragraph{The stopping probability}
% There exists an algorithm that guarantees a ratio $C^*$, and such that for each variable $X_i$, the probability of not stopping before $X_i$ appears is at least $C^*$.
% \\
% Such a result gives some insight on how the optimal algorithm should behave: it should not stop ``too early'', so that the algorithm gets the chance to see a sufficiently large number of different variables before stopping. This type of idea was already noticed in the proof of the $1/2$-bound in the fixed-order Prophet problem (see Lucier), and we hence generalize it to Prophet Secretary. 
% \paragraph{The free order model}
% Step 1 of the proof can be easily adapted to the free order model, where the gambler chooses the order in which variables appear. Hence, in the free order model too, we do not need to know the upper tails of the distributions in order to achieve the best-possible guarantee. 
\subsection{Further related work}

The framework of the prophet inequality has been generalized to a wide variety of online selection problems beyond single selection. Important generalizations include prophet inequalities for $k$-selection~\cite{CDL23,JMZ22}, matroid and matroid intersection~\cite{KW12,SVW22}, matching~\cite{AHL12,GW19}, and online combinatorial auctions~\cite{FGL14,CC23}. In these generalizations, the gambler can select multiple variables under some combinatorial constraint on the selected set, instead of just one.

Pioneered by Azar, Kleinberg and Weinberg~\cite{AKW14}, several recent works study the question of what guarantees are possible in prophet inequality models under limited sample access to the distributions.
Azar et al.~\cite{AKW14} showed that there was a connection between this model and the secretary problem, as many algorithms for the secretary problem can be adapted to obtain constant-factor sample-based prophet inequalities. Caramanis et al.~\cite{CDF22} consider sample-based greedy algorithms, which are, in a sense, a refinement of the framework of Azar et al~\cite{AKW14}. With this framework, they obtained improved factors for various classes of matroids.

For the case of selecting a matching on a graph, where edges have random weights, Duetting et al.~\cite{dutting2021prophet} and Kaplan, Naori and Raz~\cite{KNR22} recently considered the case where the gambler has a single sample of each edge beforehand and showed constant-factor approximations in edge-arrival and vertex-arrival models.

For the case of combinatorial auctions, where the gambler is a seller with a set of items for sale and the random variables correspond to the valuation functions of buyers, Feldman et al.~\cite{FGL14} and Correa et al.~\cite{CCF23}, besides showing approximation factors for the full-information case, gave sample-based versions, using polynomially many samples per distribution and assuming bounded supports.

Gravin et al.~\cite{GLT22} recently studied the prophet inequality with less than one sample per distribution, i.e., we have a sample from each distribution with probability $p$ independently, in the classic fixed order version. They showed that this model smoothly interpolates between a guarantee of $0$ if there are no samples, and the guarantee of $1/2$ if we have one sample per distribution. Similarly, Correa et al.~\cite{CCES23} considered a similar question for the i.i.d. variant, where the gambler has access to $n\cdot p/(1-p)$ samples of the distribution, and showed that this model smoothly interpolates between a guarantee of $1/e$ and $0.745$, which correspond to the optimal guarantees for the secretary problem and the full-information case.

\subsection{Prophet Secretary and Free Order: the case of known distributions}
Let $n \geq 1$ and $k \geq 1$. 
Consider $n$ independent positive random variables $X_1,X_2,\dots,X_n$, which distributions are known to the gambler. 
The problem proceeds as follows:
\begin{itemize}
\item 
A permutation $\sigma$ is drawn uniformly among the set of permutations of $\left\{1,\dots,n\right\}$,
\item 
At each time $t=1,\dots,n$, the gambler is informed of the realization of $X_{\sigma(t)}$, as well as $\sigma(t)$. He has to choose whether to pick 
$X_{\sigma(t)}$ or not. If he picks it, this is his final reward, and otherwise, we go to stage $t+1$. 
\end{itemize}
The gambler aims at finding a stopping rule $T$ that maximizes $\m{E}(X_T)$. It is well-known that such a maximum can be realized with an \textit{adaptive threshold algorithm}, that is, an algorithm that at each stage makes a decision based on a threshold depending only on the identity of the variables that have arrived so far. Formally,  
an \textit{adaptive threshold algorithm} is a mapping $\pi: \cup_{t=1}^{n} \left\{1,\dots,n\right\}^t \rightarrow \m{R}_+$, with the following interpretation: at stage $t$, if variables $\sigma(1),\dots,\sigma(t-1)$ have been observed, then the gambler picks variable $X_{\sigma(t)}$ if and only if $X_{\sigma(t)}>\pi(\sigma(1),\dots,\sigma(t-1),\sigma(t))$.
In all this paper, we will restrict to adaptive threshold algorithms and \textit{randomized} adaptive threshold algorithms, which correspond to probability distributions over adaptive threshold algorithms. To avoid repetition, we will simply call them  ``algorithm'' and ``randomized algorithm'', respectively. 
\\

If the gambler knew the realizations of the $X_i$ beforehand, he would be able to secure $\m{E}(\max X_i)$. The main question in this problem is what is the maximal constant $C_{S} \in [0,1]$ such that, for any $F_1,\dots,F_n$, there exists a stopping rule $T$ satisfying 
\begin{equation*}
    \m{E}(X_T) \geq C_S \cdot \m{E}(\max X_i).
\end{equation*}
Though such a constant has not been determined yet, it has been shown that $0.669 \leq C_S \leq 0.724$.
\\

The \textit{Free-Order} problem proceeds similarly, to the difference that the permutation $\sigma$ is chosen by the gambler, instead of being drawn uniformly. In this context, a threshold stopping rule can be viewed as a pair $(\sigma,\pi)$, where $\sigma$ is a permutation of $[n]$, and $\pi \in \m{R}_+^n$.
The permutation represents the order of the variables, while $\pi_i$ represents the threshold used at stage $i$. Note that thresholds are assumed to be non-adaptive. This is without loss of generality, since there is no relevant information that the gambler can learn online. Indeed, the order is fixed beforehand by the gambler, and the observed past values are irrelevant, by independence. 

We will call \textit{algorithm} such a stopping rule, and \textit{randomized algorithm} a probability distribution over algorithms. We call $C_F$ the corresponding constant ($F$ stands for ``Free''). It is known that $0.725 \leq C_F \leq 0.745$. 
\subsection{Sample-based Prophet Secretary and Free-Order: the case of unknown distributions}

Let us modify the Prophet Secretary setting described previously by assuming that the gambler does not know the  distributions $F_1,\dots,F_n$, but instead has access to some number of samples for each variable. Formally, let $S^1_i,\dots,S^k_i$ be $k$   independent copies of $X_i$, that we will call \textit{samples} of $X_i$.
Before the game starts, the gambler is informed of the realizations of  samples $S^j_i, i=1 \dots n, \ j=1 \dots k$. Then, the problem proceeds as in the previous setting: the gambler is presented with the $X_1,\dots,X_n$ in random order, and at each step has to decide whether to stop and pick the variable, or discard it and continue. 

The sample-based Free Order model is defined similarly. 
A natural question is then to ask how many samples the gambler needs in order to achieve the same ratio as in the full information case. Our main result is the following: 
% This 
% For each $i \in \left\{1,\dots,n\right\}$, let $S^1_i,\dots,S^k_i$ be $k$ independent copies of $X_i$, that we will call \textit{samples} of $X_i$.
% Given a random permutation 
% $\sigma: \left\{1,\dots,n\right\} \rightarrow \left\{1,\dots,n\right\}$ drawn uniformly among possible permutations, at each stage 
% the gambler is presented successively with $X_{\sigma(1)}$, $X_{\sigma(2)}, \dots, X_{\sigma(n)}$.
\begin{theorem} \label{main_thm}
Assume that $k \geq O(\varepsilon^{-5})$. In the Prophet Secretary model, the gambler can achieve an expected payoff at least equal to $(C_S-\varepsilon) \m{E}(\max_{1 \leq i \leq n} X_i)$. Moreover, we can compute a stopping policy that attains this bound in time polynomial in $n$. 

The same results hold in the Free Order model, replacing $C_S$ by $C_F$.
\end{theorem}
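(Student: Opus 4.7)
The plan is to decouple the sample-based algorithm design from the full-information optimal constants (which are unknown for both $C_S$ and $C_F$) by proving three structural reductions, and then implement the resulting existence proof by a polynomial-size LP. First I would establish that $C_S$ (resp.\ $C_F$) is already achieved by a \emph{randomized} algorithm that only uses the portion of each $F_i$ below some sufficiently high quantile, i.e.\ one that ignores the upper tail of each distribution. The natural route is a minimax (Sion) argument: view the problem as a zero-sum game where the gambler picks a randomized threshold policy depending only on the bulk of the distributions, and nature picks how the upper-tail mass of each $F_i$ contributes to $\m{E}(\max X_i)$. Verifying that both strategy sets are convex and compact, and applying the full-information guarantee $C_S$ against each fixed nature strategy, yields a gambler strategy that guarantees $C_S-O(\varepsilon)$ while being blind to the tails.

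Next I would exploit the Liu et al.\ decomposition. Call $X_i$ an $\varepsilon$-\emph{small} variable if $\Pr(X_i>0)\le\varepsilon$. Given any instance with at most $K=O(\varepsilon^{-1}\log\varepsilon^{-1})$ non-small variables, I would show that replacing the small variables by an i.i.d.\ sequence with the same distribution of the maximum costs the algorithm at most $O(\varepsilon)\,\m{E}(\max X_i)$. The key coupling is a monotonicity argument: under the i.i.d.\ replacement any fixed adaptive-threshold algorithm stops (weakly) earlier, and conditional on stopping on a small variable the conditional expectation is essentially unchanged, because each small variable individually contributes a vanishing amount to the maximum. Together with Step 1 this means we may pretend the bulk of the instance is a single i.i.d.\ block whose common distribution we only need to know below its upper tail.

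Step 3 plugs samples into the above reductions. With $O(\varepsilon^{-1}\log\varepsilon^{-1})$ samples per variable, standard concentration on $\Pr(X_i>0)$ lets me reliably classify variables as small/large and keep at most $K$ large ones, with high probability. With a further $O(\varepsilon^{-4})$ samples per variable (and pooled across the many small variables), the DKW inequality gives uniform $\varepsilon$-accurate estimates of all relevant CDFs below the high quantile identified in Step 1; Step 1 absolves me from estimating anything above it. Feeding these estimates into the algorithm from Step 1 and the coupling from Step 2, a union-bound plus a Lipschitz-type stability argument on the stopping rule yields the $C_S-\varepsilon$ guarantee with $k=O(\varepsilon^{-5})$ samples per distribution in both the Prophet Secretary and Free-Order models.

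Finally, for the polynomial-time claim I would encode the Step 1 game as a linear program. A priori it has exponentially many variables, since a randomized adaptive-threshold policy in the random-order model specifies a threshold per prefix of permutations, and the Free-Order model ranges over all $n!$ orders. After the reductions, however, the effective instance has only $K=O(\varepsilon^{-1}\log\varepsilon^{-1})$ distinct ``large'' types with polynomial-size supports, plus $n-K$ i.i.d.\ small variables; by symmetry the only relevant state is the multiset of large types already seen and the number of remaining small variables, which is polynomial in $n$. This collapses the LP to polynomial size, and I would solve it and round the optimum into an explicit randomized stopping rule. The hardest step is Step 2: proving the i.i.d.\ replacement lemma quantitatively, since the argument must give a clean $O(\varepsilon)\,\m{E}(\max X_i)$ loss uniformly over all policies, handle both arrival models simultaneously, and be robust enough that the perturbation propagates cleanly through the minimax in Step 1 and through the sample-based estimates in Step 3.
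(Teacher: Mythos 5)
Your proposal matches the paper's approach step for step: the minimax (Sion) argument to make the algorithm blind to upper tails, the Liu-et-al.\ $\varepsilon$-small decomposition and geometric-mean i.i.d.\ replacement with a coupling showing the algorithm stops earlier, DKW/Chernoff-based classification and CDF estimation, and a polynomial-size LP obtained by exploiting symmetry over the i.i.d.\ block. The only cosmetic difference is that the paper factors Step~1 into two pieces — first extracting via the minimax an algorithm with $\m{P}(A_i)\geq C^*$ for all $i$, then separately converting that reach-probability guarantee into a tail-blindness result — but this is the same argument you sketch.
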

The algorithm that achieves such a ratio is randomized. %Indeed, it is a distribution over adaptive threshold algorithms. 
The fact that randomization is required essentially comes from the fact that to an extent the sample-based problem is adversarial: from the samples we can estimate but not exactly calculate the expectation of the maximum or the expectation of what is obtained by the algorithm, and we must be prepared for the worst case over the $n$ distributions.
\\

A characteristic feature of our proof is that it treats Prophet Secretary and Free Order in an almost identical way. To avoid unnecessary repetition, we will specify which of these two problems we are addressing only when some case distinction has to be made.
\\

% In this paper, we will restrict to algorithms 
% Let $\Sigma_p$ be the set of algorithms $ALG$ with the following properties: 
% \begin{itemize}
% \item
% $ALG$ picks any value higher than $M$,
% \item 
% $ALG$ is a threshold algorithm, such that each threshold depends only on the variables seen. 
% \end{itemize}

% Such an algorithm can be identified with a mapping 
% $\sigma: \cup_{j=0}^{n-1} \left\{1,\dots,n\right\}^j \rightarrow [0,M]$.
% Given $n$ cumulative distributions $F_1,\dots,F_n$ and an algorithm $ALG$, we denote by $\m{E}_F(ALG)$ the expected revenue obtained by $ALG$ when faced with an instance drawn from $(F_1,\dots,F_n)$. 

\section*{Step 1: upper tails distributions do not need to be known}
%\begin{lemma} \label{lemma:approx}
%Let $(F_1,\dots,F_n)$ and $(F'_1,\dots,F'_n)$ and $\delta$, $\delta_1,\dots, \delta_n$ such that for all $i$, 
%\begin{equation}
%    \left\|F_i-F'_i \right\|_\infty \leq \delta_i
%\end{equation}
%Then, there exists an algorithm that guarantees $C^*-\sum_{i=1} \delta_i-\delta$ for instances $F_1,\dots,F_n$, and that only uses the knowledge of $F'_1,\dots,F'_n$.
%\begin{remark}
%This point is not a direct consequence of Duetting et al. (explain)
%\end{remark}
%\begin{proof}
%\end{proof}
%\end{lemma}
%In this section, all the algorithms under consideration will be adaptive threshold algorithms. 
By a slight abuse, we will use the same notation for an algorithm $ALG$, and the realized payoff it achieves. We will hence denote by $\m{E}_X(ALG)$ its expected payoff, where $X=(X_1,\dots,X_n)$ is the instance under consideration. When there is no ambiguity, we will drop the index $X$ in the expectation. We will also use notation $X^*:=\max_{1 \leq i \leq n} X_i$. In Step 1, all statements are valid both for Prophet Secretary and the Free Order models. The notation $C^*$ stands for the optimal ratio of the model under consideration, that is, $C_S$ for Prophet Secretary, and $C_F$ for Free Order. Fix some $\alpha>0$ and $\varepsilon>0$. 
%Only one small part in the proof of Proposition \ref{prop:minmax} has to made so 
\\

The goal of this section is to prove the following proposition: 
\begin{proposition} \label{prop:close}
Let $(F_1,\dots,F_n)$ be an instance distribution, and $M_1,\dots,M_n \geq 0$ such that 
$\prod_{i=1}^n F_i(M_i) \geq 1-\alpha$. 
Assume that the gambler has access to $M_1,\dots,M_n$, and to some instance distribution $(F'_1,\dots,F'_n)$ satisfying that for all $i$, for all $x \leq M_i$,
\begin{equation*}
    (1-\varepsilon) (1-F'_i(x)) \leq 1-F_i(x) \leq (1+\varepsilon) (1-F'_i(x)).
\end{equation*}
Then, there exists an algorithm that depends only on $(F'_1,\dots,F'_n)$ such that, if the realizations come from $F_1,\dots,F_n$, the gambler guarantees a ratio $C^*(1-\alpha)(1-\varepsilon)^2$.
\end{proposition}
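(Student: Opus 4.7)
The argument is via a minimax principle. Let $\mathcal{G}$ denote the convex set of product distributions $G = (G_1, \ldots, G_n)$ satisfying the two hypotheses of the proposition: the multiplicative survival-function closeness $(1-\varepsilon)(1-F'_i(x)) \leq 1-G_i(x) \leq (1+\varepsilon)(1-F'_i(x))$ for $x \leq M_i$ (with $G_i$ unconstrained above $M_i$), together with $\prod_i G_i(M_i) \geq 1-\alpha$. Since the gambler sees only $F'$ and $M$, any algorithm he commits to must perform well against every $G \in \mathcal{G}$, not only against the true $F \in \mathcal{G}$. For each fixed $G \in \mathcal{G}$, the definition of $C^*$ in the corresponding full-information model (Prophet Secretary or Free Order) supplies a possibly $G$-dependent algorithm $\pi_G$ with $\m{E}_G[\pi_G] \geq C^* \, \m{E}_G[\max_i X_i]$.

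I would then consider the zero-sum game whose payoff to the gambler is
\[
V(\pi, G) \;=\; \m{E}_G[\pi] \;-\; c\,\m{E}_G[\max_i X_i], \qquad c := C^*(1-\alpha)(1-\varepsilon)^2,
\]
where $\pi$ ranges over randomized adaptive threshold algorithms depending only on $F'$, $M$, and online observations, and $G$ ranges over $\mathcal{G}$. For every fixed $G$, the choice $\pi = \pi_G$ yields $V(\pi_G, G) \geq (C^* - c)\,\m{E}_G[\max_i X_i] \geq 0$ since $c \leq C^*$, so $\inf_G \sup_\pi V(\pi, G) \geq 0$. A minimax theorem -- applied to a finite-support discretization of both strategy spaces so that von Neumann's theorem delivers the sup-inf swap directly, followed by removing the discretization by a continuity argument -- then gives $\sup_\pi \inf_G V(\pi, G) \geq 0$. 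A (near-)maximizer $\pi^*$ is a randomized algorithm depending only on $F'$ and $M$ satisfying $V(\pi^*, G) \geq 0$ for every $G \in \mathcal{G}$; specialising to $G = F$ produces the desired $\m{E}_F[\pi^*] \geq c\,\m{E}_F[\max_i X_i]$.

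The delicate step is the minimax invocation itself. The payoff $V$ is linear in the gambler's mixed strategy, but the maps $G \mapsto \m{E}_G[\pi]$ and $G \mapsto \m{E}_G[\max_i X_i]$ are multilinear -- not concave or convex -- in the marginals $(G_1, \ldots, G_n)$, because componentwise mixing of product distributions does not commute with taking products. The cleanest resolution I foresee is the finite-support discretization above, where both strategy spaces become finite-dimensional simplices to which the classical minimax theorem applies in its raw form; the slack $(1-\alpha)(1-\varepsilon)^2$ built into $c$ should leave room to absorb the discretization error in the limit. An alternative is to enlarge the adversary's strategy space to include correlated joint distributions, making $V$ linear in $G$ at the cost of re-establishing $\sup_\pi V(\pi, G) \geq 0$ for such $G$ -- which should follow by a ``learn then act'' argument exploiting the specific mixture-of-independents structure of the relevant correlated laws.
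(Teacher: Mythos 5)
Your minimax formulation does not go through, and the obstruction is exactly the one you flag in your final paragraph. Once you discretize and appeal to von Neumann (or Sion), you have to handle the adversary's \emph{mixed} strategies. A mixture of product distributions in $\mathcal{G}$ is a correlated joint law over $(X_1,\dots,X_n)$, and the definition of $C^*$ gives no guarantee against such instances. Your proposed ``learn then act'' repair cannot work uniformly: in the Free-Order model the permutation and thresholds are committed \emph{before} any realization is observed and observations carry no useful information by independence, so the gambler cannot learn which component of the mixture is active; and even in Prophet Secretary there is no a priori reason that online observations reveal the latent component quickly enough to recover a $C^*$ fraction. There is also a second, more subtle gap: you require $\sup_\pi V(\pi,G)\geq 0$ for every $G\in\mathcal{G}$, but the algorithm witnessing $C^*$ for a given $G$ may use thresholds above $M_i$ (i.e.\ it exploits the upper tail of $G$), so the strategy set ``algorithms depending only on $F'$ and $M$'' that you quantify over must be taken large enough to contain all thresholds, and then compactness of the gambler's strategy space, which you need for the minimax, is lost.

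The paper's proof avoids both problems by a different decomposition. It first proves the $\varepsilon=0$ case (Proposition 2.2) and then handles the multiplicative $\varepsilon$-slack separately by a coupling argument with Bernoulli damping (Lemma 2.5). For the $\varepsilon=0$ case, instead of letting the adversary pick a whole instance, the instance is \emph{fixed} to the truncation $Y_i=\mathds{1}_{\{X_i\leq M_i\}}X_i$ (which is bounded, giving compactness of the threshold space), and the adversary's action is a nonnegative vector $b\in\m{R}_+^n$ that parametrizes an \emph{upper-tail augmentation}: adding an independent spike of height $b_i/p$ with small probability $p$ to $Y_i$. This augmented instance $Y'$ is always a product distribution, so $C^*$ applies to it, and the payoff $\gamma(\mathrm{ALG},b)=\m{E}_Y(\mathrm{ALG})+\sum_i b_i\,\m{P}_Y(A_i)-C^*\m{E}(Y^*)-C^*\sum_i b_i$ is genuinely bilinear, so Sion's theorem applies cleanly. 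The key structural output (Proposition 2.4) is a single randomized algorithm that simultaneously guarantees $\m{E}_Y(\mathrm{ALG})\geq C^*\m{E}(Y^*)$ \emph{and} reaches every variable with probability $\m{P}_Y(A_i)\geq C^*$; the latter property, combined with Lemma 2.3, is what lets the algorithm collect a $C^*(1-\alpha)$ fraction of the unknown upper tails $\m{E}(\mathds{1}_{\{X_i>M_i\}}X_i)$. Your plan has no analogue of this ``reaches every variable with probability $\geq C^*$'' property, which is the essential mechanism by which the paper converts ignorance of the upper tails into only a $(1-\alpha)$ loss.
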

% \begin{proof}
% For each $i \in [|1,n|]$, define $F''_i$ by $F''_i(x)=F'_i(x)$ if $x \leq M$, and $F''_i(x)=(1+\varepsilon) F_i(x)$ if $x>M$. Applying Lemma \ref{lem:close} to $F''$, there exists $\sigma$ an algorithm that gives a guarantee $C^*(1-\alpha)^3$. Note that 
% \begin{equation}
%     (1-\varepsilon) (1-F''_i(x)) \leq 1-F_i(x) \leq (1+\varepsilon) (1-F''_i(x)),
% \end{equation}
% hence by Lemma \ref{lem:close2}, there the performance of such an algorithm against $F_1,\dots,F_n$ is at least $C^*(1-\alpha)^3(1-\varepsilon)$. 
% \end{proof}
The above proposition means that, in order to secure a $C^*(1-\alpha)(1-\varepsilon)$ (hence, losing only an $O(\alpha+\varepsilon)$ factor with respect to $C^*$ when $\alpha$ and $\varepsilon$ are small), the gambler only needs to know a ``multiplicative'' $\varepsilon$-approximation of each distribution, and furthermore, does not need to know ``upper tails.''

%the $1-\alpha$-quantile of the distribution. 
We start by proving such a proposition for $\varepsilon=0$, namely: 
\begin{proposition} \label{prop:close2}
Let $(F_1,\dots,F_n)$ be an instance, and $M_1,\dots,M_n \geq 0$ such that 
$\prod_{i=1}^n F_i(M_i) \geq 1-\alpha$. 
Assume that the gambler has access to $M_1,\dots,M_n$ and $F_i(x)$, for all $i$ and $x \leq M_i$. 
Then, there exists an algorithm such that when presented with realizations of $F_1,\dots,F_n$, the gambler guarantees a ratio of $(1-\alpha)C^*$.
\end{proposition}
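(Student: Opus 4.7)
The plan is a minimax (saddle-point) argument. Consider the two-player zero-sum game in which the gambler commits to a randomized algorithm $\pi$ depending only on the known data $D := (M_i, F_i|_{[0, M_i]})_{i=1}^n$, while the adversary picks a full instance $F = (F_1, \dots, F_n)$ with each $F_i$ agreeing with the prescribed distribution on $[0, M_i]$ (so that $\prod_i F_i(M_i) \geq 1-\alpha$ holds automatically). Define the gambler's payoff to be
\begin{equation*}
u(\pi, F) \;=\; \m{E}_F(\pi) \;-\; (1-\alpha)\, C^* \, \m{E}_F(X^*).
\end{equation*}
The conclusion of the proposition is exactly that the gambler has a strategy guaranteeing $u \geq 0$ uniformly over the admissible $F$'s.

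The easy direction, ``adversary commits first'', is immediate: against any fixed $F$, the gambler can respond with the full-information optimal rule $\pi^*_F$, which by definition of $C^*$ satisfies $\m{E}_F(\pi^*_F) \geq C^* \m{E}_F(X^*)$; hence $u(\pi^*_F, F) \geq \alpha C^* \m{E}_F(X^*) \geq 0$, giving $\inf_F \sup_\pi u \geq 0$. To swap the quantifiers I plan to invoke a minimax theorem (Sion's, or classical von Neumann applied to the bilinear extension to mixed strategies). For this I would first reduce to uniformly bounded supports $[0, B]$, so that the set of admissible $F$'s is a weakly compact convex subset of probability measures on $[0, B]^n$, and the threshold algorithms are parameterized by finitely many reals in $[0, B]$ and become compact-convex after passing to mixed strategies. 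Since $u$ is affine in each argument separately, the minimax hypotheses are satisfied modulo continuity, and the swap yields a randomized algorithm $\pi^*$, depending only on $D$, with $u(\pi^*, F) \geq 0$ for every admissible $F$. A limiting argument as $B \to \infty$ removes the boundedness assumption.

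The main technical obstacle is that $u(\pi, F)$ can fail to be continuous in $\pi$ when $F$ has atoms at threshold values of $\pi$, breaking the continuity hypothesis of the minimax theorem. The standard remedy is to perturb, for example by convolving each $F_i$ with a kernel of arbitrarily small width, or by equipping the thresholds with infinitesimal randomized tie-breaking; either restores joint continuity on the compact strategy spaces, and the bound is preserved after passing to the limit. A secondary caveat is that this argument is non-constructive, producing only the existence of a suitable randomized $\pi^*$; this is precisely what Step 4 of the paper is designed to address, by exhibiting a polynomial-time implementation.
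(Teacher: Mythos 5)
Your high-level idea — a minimax/saddle-point argument over the unknown upper tails — is in the same spirit as the paper's proof, but the execution diverges in ways that matter. The paper does \emph{not} apply a minimax theorem directly to the game ``gambler vs.\ full upper-tail distributions.'' Instead, it first truncates to a bounded instance $Y_i=X_i\,\mathds{1}_{X_i\leq M_i}$, and proves the key \Cref{prop:minmax}: for a bounded instance $Y$, there is a randomized algorithm with $\m{E}_Y(ALG)\geq C^*\,\m{E}(Y^*)$ \emph{and} the extra property $\m{P}_Y(A_i)\geq C^*$ for every $i$, where $A_i$ is the event that the algorithm reaches variable $i$. That extra property is the whole point: it is exactly what \Cref{lem:payoff} needs to account for the reward collected from values above $M_i$ on the true instance, yielding the factor $(1-\alpha)$. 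The paper's minimax game is much smaller than yours — the adversary chooses a \emph{finite-dimensional} vector $b\in\m{R}_+^n$ (the expected contribution of each upper tail), and the value-nonnegativity is shown by an explicit perturbation $Y_i'=Y_i+(b_i/p)Z_i$ with $p\to 0$, not by abstract compactness on measure spaces. This circumvents most of the topological difficulties you wrestle with, and, additionally, the constraint $\m{P}(A_i)\geq C^*$ is precisely what gets encoded in the linear programs of Step~4; your formulation does not deliver that structural property.

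There are concrete gaps in your proposal. First, notice that the hypothesis $\prod_i F_i(M_i)\geq 1-\alpha$ is never used in your argument: the easy direction $\inf_F\sup_\pi u \geq 0$ goes through verbatim if you replace $(1-\alpha)C^*$ by $C^*$, so if your minimax swap were valid you would be proving the strictly stronger guarantee $C^*$. This is a red flag — either you have inadvertently found a stronger theorem than the authors (possible but unlikely without explanation), or the quantifier swap does not actually hold. Second, the limiting argument $B\to\infty$ is not addressed: the algorithms $\pi^*_B$ live in spaces that change with $B$, a subsequential weak limit can push thresholds to $\infty$, and transferring the inequality $u(\pi^*_B,F)\geq 0$ to the limit requires upper semicontinuity of $\pi\mapsto\m{E}_F(\pi)$, which fails at atoms under the convention ``accept if $X>\tau$'' (the map $\tau\mapsto\m{E}(X\mathds{1}_{X>\tau})$ is lower, not upper, semicontinuous). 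Your proposed fixes (smoothing, tie-breaking) trade this discontinuity for a new one — the perturbed optimum $\pi^*_\delta$ need not converge to something that works for the unperturbed $F$ — so neither remedy closes the loop as stated. Finally, the adversary's pure-strategy set (product measures with prescribed restriction to $\prod_i[0,M_i]$) is not convex; your argument needs to explicitly pass to the mixed extension and verify compactness there, which you gesture at but do not establish. None of these is obviously fatal, but together they leave the argument well short of a proof, whereas the paper's route — a small finite-dimensional game, an explicit perturbation, and the decomposition in \Cref{lem:payoff} — avoids all three pitfalls.
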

%Let $\mathcal{A}$ be the set of algorithms that accept any value above $M$. The set $\mathcal{A}$ can be identified with the set of mappings from $\cup_{j=0}^{n-1} \left\{1,\dots,n\right\}^j$ to $[0,M]$.
The proof of Proposition \ref{prop:close2} relies on two intermediary results. The first one is a technical lemma, while the second one is a proposition that is of independent interest for the study of Prophet Secretary and Free Order problems. 
% \begin{lemma} \label{lem:max}
% \begin{equation*}
% (1-\alpha) \sum_{i=1}^n \m{E}(1_{X_i > M_i} X_i) \leq  \m{E}(1_D X^*) 
%     \leq \sum_{i=1}^n \m{E}(1_{X_i > M_i} X_i)
% \end{equation*}
% \end{lemma}
% \begin{proof}
% The right-hand side inequality is a direct consequence of the union bound. As for the left-hand side one, we have
% \begin{eqnarray*}
%     \m{E}(1_{D} X^*)
%     &\geq& \sum_{i=1}^n \m{E}(1_{\left\{X_i > M_i\right\} \cap B_i} X_i)
%     \\
%     &=&
%     \sum_{i=1}^n \m{E}(1_{\left\{X_i > M_i\right\}} X_i|B_i) \m{P}(B_i)
%     \\
%     &\geq& (1-\alpha) \sum_{i=1}^n \m{E}(1_{\left\{X_i > M_i\right\}} X_i). 
% \end{eqnarray*}
% \end{proof}
\begin{lemma} \label{lem:payoff}
Let $(F_1,\dots,F_n)$ be an instance, and $M_1,\dots,M_n \geq 0$ such that $\prod_{i=1}^n F_i(M_i) \geq 1-\alpha$. 
Let $ALG$ be some algorithm such that for all $i \in [n]$, when $X_i$ is proposed and $X_i > M_i$, then the algorithm picks $X_i$. For each $i \in [n]$, let $A_i$ be the event ``ALG does not stop before variable $X_i$ appears'', $B_i$ be the event $``\forall j \neq i, X_j \leq M_j$'', and
$D$ be the event `` $\exists \ i, \ X_i>M_i$''. Then, 
\begin{equation*}
\m{E}(1_D ALG) \geq (1-\alpha) \sum_{i=1}^n \m{E}(1_{X_i > M_i}X_i) \m{P}(A_i| B_i) %\leq \sum_{i=1}^n \m{E}(1_{X_i > M_i}X_i) \m{P}(A_i| B_i).
\end{equation*}
\end{lemma}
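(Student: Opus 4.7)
The plan is to establish the pointwise lower bound
\[
1_D \cdot ALG \;\geq\; \sum_{i=1}^n 1_{A_i}\, 1_{X_i > M_i}\, X_i,
\]
and then take expectations, exploiting the independence of $A_i$ from $X_i$ together with a Bayes-type inequality to introduce the conditioning on $B_i$.

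For the pointwise step I would argue by cases. On $D^c$ every $X_i \leq M_i$, so both sides are zero. On $D$ let $i_0$ be the first index in arrival order for which $X_{i_0} > M_{i_0}$; by the hypothesis on $ALG$, the algorithm must stop no later than the time $X_{i_0}$ is presented. If it stops exactly then, then $ALG = X_{i_0}$, the $i = i_0$ term on the right equals $X_{i_0}$ (since $A_{i_0}$ holds and $X_{i_0} > M_{i_0}$), and every other term vanishes: for $i$ with $X_i \leq M_i$ the factor $1_{X_i > M_i}$ is zero, while for $i \neq i_0$ with $X_i > M_i$ the variable $X_i$ arrives strictly after $X_{i_0}$ so $ALG$ has already stopped and $A_i$ fails. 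If $ALG$ stops strictly before $X_{i_0}$, then the right-hand side is zero while $ALG \geq 0$. In all cases the inequality holds.

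Passing to expectations, note that $A_i$ is determined by the arrival order, the algorithm's internal randomness, and the realizations $\{X_j : j \text{ arrives before } i\}$, all of which are independent of $X_i$ in both the Prophet Secretary and Free Order models. Hence
\[
\m{E}(1_{A_i}\, 1_{X_i > M_i}\, X_i) \;=\; \m{P}(A_i)\, \m{E}(1_{X_i > M_i}\, X_i).
\]
Lower-bounding $\m{P}(A_i) \geq \m{P}(A_i \cap B_i) = \m{P}(A_i \mid B_i)\, \m{P}(B_i)$ and using
\[
\m{P}(B_i) \;=\; \prod_{j \neq i} F_j(M_j) \;\geq\; \prod_{j=1}^{n} F_j(M_j) \;\geq\; 1-\alpha,
\]
each summand is at least $(1-\alpha)\, \m{P}(A_i \mid B_i)\, \m{E}(1_{X_i > M_i}\, X_i)$. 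Summing over $i$ and combining with the pointwise inequality yields the lemma.

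The main obstacle is the pointwise bound, specifically verifying that the right-hand side does not overcount: on $D$, at most one index $i$ can simultaneously satisfy $A_i$ and $X_i > M_i$, namely the first threshold-exceeder $i_0$, which makes the sum collapse to $X_{i_0}$ and match $ALG$ exactly. Once this is in place, the independence of $A_i$ from $X_i$ and the Bayes-type step that follow are routine.
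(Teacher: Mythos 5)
Your proof is correct, and it takes a slightly different route to the pointwise lower bound than the paper. The paper decomposes $1_D \cdot ALG \geq \sum_i 1_{\{X_i > M_i\} \cap B_i}\, ALG$, obtaining disjointness of the summands from the $B_i$ events (on $B_i$ no other variable exceeds its $M_j$, so the events $\{X_i > M_i\} \cap B_i$ are pairwise disjoint regardless of what the algorithm does); only afterwards does it restrict to $A_i$ and use $ALG \geq X_i$ there. You instead prove the stronger pointwise bound $1_D \cdot ALG \geq \sum_i 1_{A_i}\, 1_{X_i > M_i}\, X_i$, with disjointness coming not from $B_i$ but from the algorithmic hypothesis: the algorithm must stop at the first threshold-exceeder, so $A_i \cap \{X_i > M_i\}$ can hold for at most one $i$. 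You then introduce $B_i$ only at the very end through $\m{P}(A_i) \geq \m{P}(A_i \cap B_i)$, rather than threading it through the whole argument. Both versions then rely on the same two facts — that $A_i$ (and $B_i$) is independent of $X_i$, and that $\m{P}(B_i) \geq \prod_j F_j(M_j) \geq 1-\alpha$ — so the endgame is identical. Your formulation is slightly cleaner in that the pointwise inequality it isolates is the sharper one; the paper's formulation is perhaps more robust in that its disjointness step does not lean on the stopping hypothesis, which it uses only for $ALG \geq X_i$ on $A_i \cap \{X_i > M_i\}$.
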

%\textcolor{red}{The RHS is false}
\begin{proof}
We have
\begin{eqnarray*}
    \m{E}(1_D ALG)&\geq& 
    \sum_{i=1}^n \m{E}(1_{\left\{X_i > M_i\right\} \cap B_i} ALG)
    \\
&\geq& \sum_{i=1}^n \m{E}(1_{\left\{X_i > M_i\right\} \cap A_i \cap B_i} X_i).
    % \\
    % &=& \sum_{i=1}^n \m{E}(1_{\left\{X_i \geq M\right\} \cap A_i} X_i)
    % -\m{E}(1_{\left\{X_i \geq M\right\} \cap \left\{\exists j \neq i, X_j \geq M\right\} \cap A_i} X_i).
\end{eqnarray*}
% We have 
% \begin{eqnarray*}
%    \m{E}(1_{\left\{X_i \geq M\right\} \cap \left\{\exists j \neq i, X_j \geq M\right\} \cap A_i} X_i)&=& \m{E}(1_{\left\{X_i \geq M\right\} \cap A_i}X_i |\exists j \neq i, X_j \geq M) \m{P}(\exists j \neq i, X_j \geq M)
%    \\
%    &\leq& \m{E}(1_{\left\{X_i \geq M\right\} \cap A_i}X_i)\alpha.
% \end{eqnarray*}
% Moreover, 
% \begin{eqnarray*}
%     \m{E}(1_{\left\{X_i \geq M\right\} \cap A_i}X_i)&=&
%     \m{E}(1_{X_i \geq M}X_i|A_i) \m{P}(A_i)
%     \\
%     &=& \m{E}(1_{X_i \geq M}X_i) \m{P}(A_i)
% \end{eqnarray*}
Moreover,
\begin{eqnarray*}
    \m{E}(1_{\left\{X_i > M_i \right\} \cap A_i \cap B_i} X_i)&=& \m{E}(1_{\left\{X_i > M_i\right\}} X_i|
    A_i \cap B_i)
    \m{P}(A_i \cap B_i)
    \\
    &\geq& \m{E}(1_{\left\{X_i > M_i\right\}} X_i) \m{P}(A_i| B_i) \m{P}(B_i)
    \\
    &\geq&
    (1-\alpha) \m{E}(1_{\left\{X_i > M_i\right\}} X_i) \m{P}(A_i| B_i).
\end{eqnarray*}
Thus, we get
\begin{eqnarray*}
    \m{E}(1_{D} ALG) \geq
   (1-\alpha) \sum_{i=1}^n \m{E}(1_{X_i > M_i}X_i) \m{P}(A_i|B_i).
\end{eqnarray*}
% As for the other inequality, we have
% \begin{eqnarray*}
%  \sum_{i=1}^n \m{E}(1_{X_i > M_i}X_i) \m{P}(A_i|B_i)
%  &\geq&\sum_{i=1}^n \m{E}(1_{X_i > M_i}X_i) \m{P}(A_i)
%  \\
%  &=& \sum_{i=1}^n \m{E}(1_{\left\{X_i > M_i \right\}\cap A_i  }X_i)
%  \\
%   &\geq& \sum_{i=1}^n \m{E}(1_{D,\text{ALG picks $X_i$}}X_i)
%   \\
%   &=& \m{E}(1_{D} ALG). 
% \end{eqnarray*}
\end{proof}
\begin{proposition} \label{prop:minmax}
Let $(Y_1,\dots,Y_n)$ be an instance such that all variables are bounded by some $y \in \m{R}_+$. Then, there exists a randomized algorithm that guarantees a ratio $C^*$ for this instance, and that in addition satisfies that for all $i \in \left\{1,\dots,n\right\}$, $\m{P}_Y(A_i) \geq C^*$. 
\end{proposition}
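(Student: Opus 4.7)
The plan is a separation-plus-construction argument. I would start by considering the convex set
\[
V := \Bigl\{\Bigl(\frac{\m{E}(\pi)}{\m{E}(\max_i Y_i)},\, \m{P}_Y(A_1^\pi),\,\dots,\,\m{P}_Y(A_n^\pi)\Bigr) : \pi \text{ is a randomized algorithm}\Bigr\} \subset [0,1]^{n+1}.
\]
Because the $Y_i$ are bounded by $y$, thresholds may be restricted to $[0,y]$ over the finite family of identity histories, so the space of randomized algorithms is weak-$*$ compact and $V$ is compact and convex. The conclusion of the proposition is exactly that $V \cap [C^*,1]^{n+1}\neq\emptyset$. Suppose, for contradiction, that it is empty. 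Then separating the compact convex set $V-\m{R}_{+}^{n+1}$ from the closed convex set $[C^*,\infty)^{n+1}$ via Hahn--Banach produces a nonzero $\mu \in \m{R}^{n+1}_+$ (nonnegativity is forced because the two sets are unbounded in opposite coordinate directions), with, after normalizing $\sum_i \mu_i = 1$,
\[
\sup_\pi V_\mu(\pi) < C^*, \quad\text{where}\quad V_\mu(\pi):=\mu_0\frac{\m{E}(\pi)}{\m{E}(\max_i Y_i)} + \sum_{i=1}^n \mu_i\,\m{P}_Y(A_i^\pi).
\]

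The heart of the plan is to contradict this by constructing, for the fixed $\mu$, a perturbation of the instance whose standard prophet ratio encodes $V_\mu$. Fix a small parameter $\epsilon>0$, set $\alpha_i:=\epsilon\mu_i/(1-\mu_0)$, and choose $K$ so that $K\epsilon=\frac{1-\mu_0}{\mu_0}\m{E}(\max_i Y_i)$; the edge cases $\mu_0\in\{0,1\}$ are handled as limits or trivially. Define $\tilde Y^\mu_i$ to equal $K$ with probability $\alpha_i$ and to be distributed as $Y_i$ with probability $1-\alpha_i$, independently across $i$. Applying the definition of $C^*$ to the instance $\tilde Y^\mu$, there is an algorithm $\pi^*$ satisfying $\m{E}(\pi^*\text{ on }\tilde Y^\mu)\geq C^*\,\m{E}(\max_i \tilde Y^\mu_i)$, and one can assume without loss that $\pi^*$ picks whenever it encounters the value $K$ (which is larger than every other possible realization). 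A first-order expansion in $\epsilon$, using that spike events at positions $j\neq i$ perturb $\pi^*$'s trajectory before the arrival of $Y_i$ only through $O(\epsilon)$-probability events, then yields
\[
\m{E}(\max_i\tilde Y^\mu_i) = \frac{\m{E}(\max_i Y_i)}{\mu_0}+O(\epsilon), \qquad \m{E}(\pi^*\text{ on }\tilde Y^\mu)=\m{E}_Y(\pi^*)+K\sum_i\alpha_i\,\m{P}_Y(A_i^{\pi^*})+O(\epsilon),
\]
so that the reach probabilities on $\tilde Y^\mu$ and on the unperturbed $Y$ agree to leading order. Dividing and rearranging with the chosen coupling of $K$ and $\epsilon$ gives $V_\mu(\pi^*)\geq C^*-O(\epsilon)$, which for sufficiently small $\epsilon$ contradicts $\sup_\pi V_\mu(\pi)<C^*$.

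The contradiction forces $V\cap[C^*,1]^{n+1}\neq\emptyset$, which is the proposition. The main obstacle in the plan is the bookkeeping in the first-order expansion: I must verify (i) that ``$\pi^*$ picks on seeing $K$'' is truly without loss, (ii) that the multi-spike contribution of order $O((\sum_i\alpha_i)^2\cdot K)=O(\epsilon^2\cdot K)=O(\epsilon)$ stays of lower order under the coupling $K=\Theta(1/\epsilon)$, and (iii) that the reach probabilities on the perturbed and unperturbed instances match to leading order, so that $K\sum_i\alpha_i\,\m{P}(A_i^{\pi^*})$ computed on $\tilde Y^\mu$ can be re-expressed through $\m{P}_Y(A_i^{\pi^*})$. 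A secondary technical point is to confirm that weak-$*$ compactness of the randomized algorithm space together with the (semi-)continuity of $\pi\mapsto V_\mu(\pi)$ makes $V$ closed, so that Hahn--Banach really applies. The argument is non-constructive but, given that later steps only need existence of such a $\pi$ together with structural properties, this is sufficient.
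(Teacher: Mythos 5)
Your argument is correct, and it is essentially the dual presentation of the paper's own proof. The paper sets up an explicit two-player zero-sum game with Player~2 choosing $b\in\m{R}^n_+$ and applies Sion's minimax theorem; it verifies $v\geq 0$ by constructing, for each $b$, the additive perturbation $Y'_i=Y_i+(b_i/p)Z_i$ with $Z_i\sim\mathrm{Bernoulli}(p)$, applying the definition of $C^*$ to $Y'$, and coupling executions on $Y$ and $Y'$ via ``on $Y'$ one stops no later.'' You instead separate the compact convex achievable set $V$ from the box $[C^*,\infty)^{n+1}$ via Hahn--Banach and refute the resulting $\mu$ with a mixture perturbation (spike at level $K$ with probability $\alpha_i$, else $Y_i$). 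The two are the same minimax duality: the paper's $b_i/p$ plays the role of your $K$, their $Z_i$ is your spike indicator, and their ``send $b_i=N\to\infty$'' step corresponds to your normalization $\sum_j\mu_j=1$ together with the $\epsilon\to 0$ limit. What the paper's additive perturbation buys is a cleaner coupling ($Y'_i\geq Y_i$ pointwise), avoiding your separate accounting of single-spike vs.\ multi-spike events; what your version buys is a more geometric picture of the statement as ``$V$ meets the corner box,'' which some readers may find more transparent.

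Two small corrections to your write-up: the set $V-\m{R}_+^{n+1}$ is closed but not compact, so you should instead separate the compact $V$ itself from the closed set $[C^*,\infty)^{n+1}$ (which are disjoint since $V\subset[0,1]^{n+1}$); nonnegativity of $\mu$ then follows from finiteness of $\inf_{w\in[C^*,\infty)^{n+1}}\langle\mu,w\rangle$. Second, the edge cases $\mu_0\in\{0,1\}$ are indeed immediate (the never-stop algorithm gives $\m{P}(A_i)=1$; the definition of $C^*$ gives the ratio bound), but you should state them rather than defer to limits, since $K$ is undefined at $\mu_0=0$. Your three flagged verification points (i)--(iii) all go through: (i) is a standard domination argument since $K>y$ dominates every other realization; (ii) the double-spike mass is $O(\epsilon^2)$ and contributes $O(\epsilon^2 K)=O(\epsilon)\cdot\m{E}(\max_i Y_i)/\mu_0$, which is lower order relative to both sides of the $C^*$ inequality; (iii) conditional on a single spike at $i$, the pre-$i$ trajectory is identical in law to the trajectory on $Y$ (thresholds depend only on identities, not realizations), so the reach probabilities agree exactly up to the $O(\epsilon)$ probability of a second spike.
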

\begin{proof}
%Let ($Y_1,\dots,Y_n$) be an instance such that each variable is almost surely bounded by some $y \in \m{R}_+$. 
Recall that in the Prophet Secretary problem, we consider adaptive threshold algorithms, that correspond to mappings 
%Let $y:=\m{E}(Y^*)+1$. 
 from $\cup_{j=0}^{n} \left\{1,\dots,n\right\}^j$ to $\m{R}_+$. In the Free order problem, an algorithm is a pair $(\sigma,\tau) \in \Sigma_n \times \m{R}_+^n$, where $\sigma$ is a permutation of $\left\{1,\dots,n\right\}$. 
 %Let $y:=\m{E}(Y^*)+1$. 
 Because all the $Y_i$ are bounded by $y$, 
% The optimal algorithm for the instance $(Y_1,\dots,Y_n)$ should accept any value above $y$, 
 we can assume without loss of generality that all thresholds take values in $[0,y]$. This makes the set of algorithms a compact set, that we denote by $\mathcal{A}$. 
%The remainder of the proof is identical, whether one considers the Prophet Secretary Problem, or one considers the Free order problem. Therefore, to avoid unecessary formalism and repetition, we will make the abuse of notation of calling $\mathcal{A}$ the set of such mappings. 
 
Define the zero-sum game where 
Player 1 chooses an algorithm $ALG$ in $\mathcal{A}$, and Player 2 chooses $b \in \m{R}^n_+$. The payoff is
\begin{equation*}
    \gamma(ALG,b)= \m{E}_Y(ALG)+ \sum_{i=1}^n b_i \m{P}_Y(A_i)-C^* \cdot \m{E}(Y^*)-
    C^* \cdot \sum_{i=1}^n b_i.
\end{equation*}
Player 1's action set $\mathcal{A}$ is compact, Player 2's action set $\m{R}^n_+$ is convex, and the payoff function is linear in Player 2's action. In order to apply Sion's minmax theorem, we would need Player 1's action set to be convex, and the payoff function to be linear in Player 1's action. To this aim, we extend the set of actions of Player 1, by considering 
$\mathcal{M}$ the set of probability measures over $\mathcal{A}$. For $\mu \in \mathcal{M}$ and $b \in \m{R}^n_+$, define 
$\gamma^*(\mu,b)$ as being the expectation of $\gamma(ALG,b)$, where ALG is distributed according to $\mu$. The normal-form zero-sum game $(\mathcal{M},\m{R}^n_+,\gamma^*)$ then satisfies all the assumptions of Sion's theorem, hence has a value $v$: 
\begin{equation}
    v=\max_{\mu \in \mathcal{M}} \inf_{b \in \m{R}^n_+}
    \gamma^*(\mu,b)=\inf_{b \in \m{R}^n_+} \max_{\mu \in \mathcal{M}}
    \gamma^*(\mu,b).
\end{equation}
%\begin{lemma} \label{lem:value}
%    The value $v$ is positive. 
%\end{lemma}
%\begin{proof}
We claim that $v \geq 0$. To this aim, it is enough to show that for any $\delta>0$, for any $b \in \m{R}^n_+$, there exists an algorithm $ALG$ satisfying
$\gamma(ALG,b) \geq -\delta$. Given $\delta>0$ and $b \in \m{R}^n_+$, let $p \in (0,1]$ be small enough so that $[1-(1-p)^{n-1}] C^* \sum_{i=1}^n b_i \leq \delta/2$, $np\cdot \m{E}(Y^*) \leq \delta/2$ and $b_i/p \geq \sum_{i=1}^n b_i+y$. Let $Z_1,\dots,Z_n$ be i.i.d. Bernoulli random variables of parameter $p$. 
Define variables $(Y'_1,\dots,Y'_n)$ by $Y'_i:=Y_i+(b_i/p) \cdot Z_i$.  
%Let $F'_1,\dots,F'_n$ be cumulative distributions such that for all $i$ and $x \leq M$, $F'_i(x)=F_i(x)$, and $\int_{x \geq M} x F_i(x)=b_i$. Let $(X'_1,\dots,X'_n)$ be the corresponding instance. 
\\
We claim that there exists an algorithm $A \in \mathcal{A}$ that guarantees a ratio $C^*$ for the instance $Y'=(Y'_1,\dots,Y'_n)$. This fact is not entirely straightforward, since in $\mathcal{A}$, thresholds are restricted to be below $y$. First, the fact that $b_i/p \geq \sum_{i=1}^n b_i+y$ implies that $b_i/p \geq \m{E}(Y'^*)$. Hence, it is optimal for the gambler to pick any value $Y'_i$ such that $Z_i$ is active. These values are the only ones that are above $y$, and we deduce our claim.  
%there exists an algorithm $A \in \mathcal{A}$ that guarantees a ratio $C^*$ for the instance $Y'=(Y'_1,\dots,Y'_n)$ 

We can couple the execution of $ALG$ on the instance $Y'$ with its execution on the instance $Y$ by ignoring the term $(b_i/p)\cdot Z_i$ when $Z_i$ is active. Notice that on $Y'$, $ALG$ always stops earlier (or at the same time) as on $Y$. Also, notice that if on $Y'$ it stops earlier, it must be at an element $i$ for which $Z_i$ is active and the algorithm has not stopped yet on $Y$. In that case, the execution on $Y'$ gets $Y_i+b_i/p$, which is at most $Y^*+b_i/p$. Therefore, on $Y'$, $ALG$ gets at most whatever it gets on $Y$, plus $Y^* +b_i/p$ on elements $i$ where $Z_i$ is active and $ALG$ does not stop before $i$ arrives. Since $A_i$ is independent of $Z_1,\dots,Z_n$, we get that
\begin{eqnarray*}
    \m{E}_{Y'}(ALG) &\leq& 
    \m{E}_{Y}(ALG)+\sum_{i=1}^n \frac{b_i}{p}\cdot p \cdot \m{P}_Y(A_i) + np\cdot \m{E}(Y^*)\\
    &\leq &
    \m{E}_{Y}(ALG)+\sum_{i=1}^n \frac{b_i}{p}\cdot p \cdot \m{P}_Y(A_i) + \delta/2.
\end{eqnarray*}
Moreover, by definition of $ALG$, we have
\begin{eqnarray*}
\m{E}_{Y'}(ALG) &\geq& C^* \m{E}(Y'^*)
\\
&\geq& C^* \m{E}(Y^*)+ C^*(1-p)^{n-1}
\sum_{i=1}^n b_i
\\
&\geq& C^* \m{E}(Y^*)+C^* \sum_{i=1}^n b_i-\delta/2. 
\end{eqnarray*}
It follows that $\gamma(ALG,b) \geq -\delta$. Hence, $v \geq 0$. 
\\

Consequently, there exists $ALG$ a randomized algorithm such that for all $b \in \m{R}_+^n$, $\gamma^*(ALG,b) \geq v \geq 0$. Let $i \in [n]$ and $N \geq 1$. Consider $b \in \m{R}^n_+$ defined by $b_i=N$, and $b_j=0$ for $j \neq i$. We have
\begin{equation*}
\gamma^*(ALG,b):=\m{E}_Y(ALG)+ N \m{P}_Y(A_i)-C^* \cdot \m{E}(Y^*)-
    C^* N \geq 0,
\end{equation*}
and taking $N$ to infinity, we deduce that $\m{P}_Y(A_i) \geq C^*$. Hence, the proposition is proved. 
\end{proof}

We are now ready to prove \Cref{prop:close2}. 
\begin{proof}[Proof of Proposition \ref{prop:close2}]
Consider random variables $(Y_1,\dots,Y_n)$ defined by $Y_i:=1_{X_i \leq M_i} X_i, i \in [n]$. By Proposition \ref{prop:minmax}, there exists $ALG$ a randomized algorithm such that $\m{E}_Y(ALG) \geq C^* \cdot \m{E}(Y^*)$ and for all $i \in [n]$, $\m{P}_Y(A_i) \geq C^*$. 
%Let $b$ defined by $b_i:=\m{E}(1_{X_i > M_i} X_i)$.
By Lemma \ref{lem:payoff}, we have
\begin{eqnarray*}
    \m{E}_X(ALG) &=&
    \m{E}_X(1_{\left\{\forall i, X_i \leq M_i\right\}} ALG)+ 
    \m{E}(1_{\left\{\exists i, X_i>M_i\right\}} ALG)
    \\
    &\geq& 
   \m{E}_Y(ALG) +(1-\alpha) \sum_{i=1}^n \m{E}(1_{X_i > M_i}X_i) \m{P}_X(A_i|B_i).
\end{eqnarray*}
Since $\m{P}_X(A_i|B_i)=\m{P}_Y(A_i) \geq C^*$, we deduce that
\begin{eqnarray*}
  \m{E}_X(ALG) &\geq& C^* \cdot \m{E}(Y^*)+
    C^*(1-\alpha) \cdot \sum_{i=1}^n \m{E}(1_{X_i > M_i}X_i) 
    \\
    &\geq& 
    C^*(1-\alpha) \cdot \m{E}(Y^*)+
   C^*(1-\alpha) \cdot \m{E}\left(\max_{i \in [n]} \left\{1_{X_i>M_i} X_i\right\}\right)
    \\
    &\geq& C^*(1-\alpha) \m{E}(X^*).
\end{eqnarray*}
%where the last inequality comes from Lemma \ref{lem:max}. 
 We deduce that ALG guarantees a factor $C^*(1-\alpha)$.
\end{proof}
Let us now proceed with the proof of Proposition \ref{prop:close}. We need first the following lemma: 
\begin{lemma} \label{lem:close2}
Let $ALG$ be some algorithm. There is an algorithm $ALG^*$ such that for any two instances $(F_1,\dots,F_n)$ and $(F'_1,\dots,F'_n)$ that satisfy that for all $x$,
\begin{equation*}
    (1-\varepsilon) (1-F'_i(x)) \leq 1-F_i(x) \leq (1+\varepsilon)(1-F'_i(x)),
    %\leq (1+\varepsilon) (1-F'_i(x))
\end{equation*}
%Let $ALG$ be some algorithm. There is an algorithm $ALG^*$ such that
we have that $\m{E}_X(ALG^*) \geq (1-\varepsilon) \m{E}_{X'}(ALG)$. 
\end{lemma}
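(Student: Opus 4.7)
The plan is to build $ALG^*$ from $ALG$ via a stochastic coupling, and then invoke the expected-payoff guarantee of $ALG$ on an auxiliary ``thinned'' instance. The core observation is that, for each $i$, the lower-tail half of the lemma's hypothesis lets us manufacture a random variable $\tilde{X}'_i$ whose distribution is close to that of $X'_i$ but which is dominated pointwise by $X_i$; $ALG^*$ will then run $ALG$ internally on the $\tilde{X}'_i$'s while collecting the larger true rewards.

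Concretely, I would introduce $\tilde{X}'_i := Z_i X'_i$, where $X'_i \sim F'_i$ and $Z_i \sim \mathrm{Bernoulli}(1-\varepsilon)$ are mutually independent. Its CDF $\tilde{G}_i$ satisfies $1-\tilde{G}_i(x) = (1-\varepsilon)(1-F'_i(x))$, so the hypothesis $(1-\varepsilon)(1-F'_i(x)) \leq 1-F_i(x)$ yields $\tilde{G}_i(x) \geq F_i(x)$ pointwise. Hence $\tilde{X}'_i$ is stochastically dominated by $X_i$, and there is a coupling (for instance, the quantile coupling) under which $\tilde{X}'_i \leq X_i$ almost surely. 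The algorithm $ALG^*$ proceeds as follows: as each realization $X_{\sigma(t)}$ arrives, internally resample $\tilde{X}'_{\sigma(t)}$ from this coupling; simulate $ALG$ on the sequence $(\tilde{X}'_1,\dots,\tilde{X}'_n)$; whenever $ALG$ halts at some position $t$, $ALG^*$ also halts at $t$ and collects $X_{\sigma(t)}$. Since $X_{\sigma(t)} \geq \tilde{X}'_{\sigma(t)}$ almost surely, we obtain $\m{E}_X(ALG^*) \geq \m{E}_{\tilde{X}'}(ALG)$.

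It then suffices to show $\m{E}_{\tilde{X}'}(ALG) \geq (1-\varepsilon) \m{E}_{X'}(ALG)$. Since the thresholds $\tau_t$ prescribed by $ALG$ depend only on $\sigma(1),\dots,\sigma(t)$ and not on past realizations, on any instance $(Y_1,\dots,Y_n)$ with independent components the expected payoff decomposes as
\begin{equation*}
\sum_\sigma \frac{1}{n!} \sum_t \m{E}\left(Y_{\sigma(t)} \cdot 1_{\left\{Y_{\sigma(t)} > \tau_t\right\}}\right) \prod_{s<t} \m{P}\left(Y_{\sigma(s)} \leq \tau_s\right).
\end{equation*}
On the $\tilde{X}'$-instance, each reward term equals $(1-\varepsilon)\m{E}(X'_{\sigma(t)} \cdot 1_{\left\{X'_{\sigma(t)} > \tau_t\right\}})$, because the event $\{Z_i=0\}$ kills the reward; and each survival factor equals $\varepsilon + (1-\varepsilon)F'_{\sigma(s)}(\tau_s) \geq F'_{\sigma(s)}(\tau_s)$. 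Multiplying termwise and summing against the same $\sigma$-weights as for the $X'$-instance yields the desired inequality.

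The main subtlety is the coupling step itself, which rests on the stochastic domination $\tilde{G}_i \geq F_i$: this uses \emph{only} the lower-tail half of the hypothesis, and the upper-tail bound $1-F_i \leq (1+\varepsilon)(1-F'_i)$ is not actually needed in this lemma on its own; it will play its role when Lemma~\ref{lem:close2} is combined with Proposition~\ref{prop:close2} to establish Proposition~\ref{prop:close}. In the free-order variant, $ALG^*$ simply inherits the permutation $\sigma$ prescribed by $ALG$, so the same decomposition applies without any modification.
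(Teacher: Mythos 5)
Your overall strategy---thin one of the two instances by an independent Bernoulli and couple so the algorithm stops later---is in the same spirit as the paper's, but the direction of thinning is backwards, and this creates a genuine gap. You set $\tilde{X}'_i = Z_i X'_i$ so that $\tilde{X}'_i\sim\tilde{G}_i$ with $1-\tilde{G}_i=(1-\varepsilon)(1-F'_i)$, and then say $ALG^*$ should ``internally resample $\tilde{X}'_{\sigma(t)}$ from this coupling'' given the observed $X_{\sigma(t)}$. But the quantile coupling that makes $\tilde{X}'_i\leq X_i$ almost surely requires knowing $F_i$ (you must push $X_i$ back to a uniform through $F_i$ before applying $\tilde{G}_i^{-1}$), whereas the lemma requires a \emph{single} $ALG^*$, depending only on $ALG$ (hence only on $F'$), that works uniformly over all admissible $F$. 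That informational constraint is exactly what makes the lemma nontrivial, and your construction violates it: there is no way to realize a variable with distribution $\tilde{G}_i$ pointwise below $X_i$ using only $X_i$ and private randomness.

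The paper sidesteps this by thinning the \emph{observed} variable rather than the target one: $ALG^*$ draws $Z_i\sim \text{Bernoulli}\bigl(\tfrac{1}{1+\varepsilon}\bigr)$ and feeds $Z_iX_i$ to $ALG$, a transformation the gambler can perform without knowing $F_i$. Writing $X^*_i=Z_iX_i\sim F^*_i$ with $1-F^*_i=\tfrac{1-F_i}{1+\varepsilon}$, the \emph{upper-tail} hypothesis $1-F_i\leq(1+\varepsilon)(1-F'_i)$ gives $1-F^*_i\leq 1-F'_i$, so $X'_i$ stochastically dominates $X^*_i$; coupling $X'_i\geq X^*_i$ makes $ALG$ stop later on $X^*$, and the \emph{lower-tail} hypothesis then bounds the reward once a variable is reached, giving $\m{E}_X(ALG^*)=\m{E}_{X^*}(ALG)\geq\tfrac{1-\varepsilon}{1+\varepsilon}\,\m{E}_{X'}(ALG)$. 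So your remark that only the lower-tail bound is needed is an artifact of the unimplementable coupling: in the implementable version it is precisely the upper-tail bound that makes the thinned observation dominated by $X'_i$. (Your computation that the $\tilde{X}'$-payoff is at least $(1-\varepsilon)$ times the $X'$-payoff is itself correct, and would give a marginally sharper constant than $\tfrac{1-\varepsilon}{1+\varepsilon}$---but the coupling it rests on is not available to the algorithm.)
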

This lemma means that if two instances $F$ and $F'$ are $\varepsilon$-close 
``in a multiplicative way'', then we can design an algorithm for $F'$, and the performance of the algorithm against $F$ will be $\varepsilon$-close to the one of the same algorithm against $F'$. Note that if one considers instead an ``additive'' condition, such as $\left|F_i(x)-F'_i(x)\right| \leq \varepsilon$, then the result would not hold (see \cite{DK19}). 
\begin{proof}
We define $ALG^*$ by modifying $ALG$ in the following way: we draw i.i.d. Bernoulli$(\frac{1}{1+\varepsilon})$ random variables $Z_1,\dots,Z_n$, and multiply the $i$-the realization by $Z_i$. If we run $ALG^*$ on realizations of $F_1,\dots,F_n$, the expectation we get is the same as running $ALG$ on realizations drawn from $F^*_1,\dots,F^*_n$ defined by $(1-F^*_i)=\frac{1-F_i}{1+\varepsilon}$ for each $i$. We have that
\[
\frac{1-\varepsilon}{1+\varepsilon} (1-F'_i(x)) \leq 1-F^*_i(x) \leq 1-F'_i(x).
\]

Now we argue about the performance of $ALG$ on both instances by coupling the realizations and the permutation. Since $F'_i$ statistically dominates $F^*_i$ for every $i$, we can couple the realizations $X'_1,\dots,X'_n$ and $X^*_1,\dots, X^*_n$ such that $X'_i\geq X^*_i$ for all $i$ with probability $1$. This means that $ALG$ will always stop later when presented with $X^*_1,\dots, X^*_n$. Finally, conditional on reaching a realization $X^*_i$, $ALG$ obtains from it a reward that is at least a fraction $\frac{1-\varepsilon}{1+\varepsilon}$ of what it obtains from a realization $X'_i$, conditional on reaching it. This stems from the fact that $F^*_i$ approximately statistically dominates $F'_i$, i.e., if we multiply $X'_i$ by a Bernoulli$(\frac{1-\varepsilon}{1+\varepsilon})$, then $X^*_i$ statistically dominates the result. We conclude by noticing that by definition $\m{E}_X(ALG^*)=\m{E}_{X^*}(ALG)$.

%Define an instance distribution $(F''_1,\dots,F''_n)$ such that for all $i \in [n]$, for all $x \in \m{R}_+$, $1-F_i''(x)=(1-\varepsilon)(1-F'_i(x))$. Such a distribution corresponds to an instance $(X''_1,\dots,X''_n)$, where $X''_i$ is equal to $X'_i$ with probability $1-\varepsilon$, and 0 otherwise. 
%Because each $X_i$ stochastically dominates $X''_i$, we have $\m{E}_{X}(ALG) \geq \m{E}_{X''}(ALG)$. Moreover, conditional on the event that $ALG$ pick some number in instance $(X'_1,\dots,X'_n)$, the probability that $ALG$ picks the same number in instance $(X''_1,\dots,X''_n)$ is at least $1-\varepsilon$. It follows that $\m{E}_{X''}(ALG) \geq (1-\varepsilon)\m{E}_{X'}(ALG)$, hence $\m{E}_{X}(ALG) \geq (1-\varepsilon) \m{E}_{X'}(ALG)$.

% \begin{eqnarray*}
% \gamma(\pi,X'')
% &=& \sum_{i=1}^n \m{E}(1_{\forall k=1\dots i-1, X''_{\sigma(k)} \leq \tau_k,X''_{\sigma(i)}>\tau_i} X''_i)
% \\
% &=& 
% \sum_{i=1}^n \m{E}(1_{\forall k=1\dots i-1, X''_{\sigma(k)} \leq \tau_k,X''_{\sigma(i)}>\tau_i} X'_i)
% \\
% &\geq& 
% \sum_{i=1}^n \m{E}(1_{\forall k=1\dots i-1, X'_{\sigma(k)} \leq \tau_k,X''_{\sigma(i)}>\tau_i} X'_i)
% \\
% &=& 
% (1-\varepsilon) \sum_{i=1}^n \m{E}(1_{\forall k=1\dots i-1, X'_{\sigma(k)} \leq \tau_k,X''_{\sigma(i)}>\tau_i} X'_i)
% \\
% &=& (1-\varepsilon)\gamma(\pi,X'). 
% \end{eqnarray*}
%\textcolor{red}{introduce better notations}
\end{proof}
We are now ready to prove the main result of this section. 
\begin{proof}[Proof of Proposition \ref{prop:close}]
Consider $(F_1,\dots,F_n)$, $(F'_1,\dots,F'_n)$ and $M_1,\dots,M_n$ as defined in the statement of Proposition \ref{prop:close}. 
For each $i \in [n]$, define $F''_i$ by $F''_i(x)=F'_i(x)$ if $x \leq M_i$, and $1-F''_i(x)=(1-\varepsilon) (1-F_i(x))$ if $x>M_i$. We have
$\prod_{i=1}^n F''_i(M_i)=\prod_{i=1}^n F_i(M_i) \geq 1-\alpha$, 
and by assumption, the gambler knows $M_i$ and can compute $(F''_1(x),\dots,F''_n(x))$, for all $x \leq M_i$. 
Applying Proposition \ref{prop:close2} to $F''$, there exists an algorithm $ALG$ such that 
$\m{E}_{X''}(ALG) \geq C^*(1-\alpha) \m{E}(X''^*) \geq 
C^*(1-\alpha)(1-\varepsilon) \m{E}(X^*)$. 
Applying Lemma \ref{lem:close2} to $(F_1,\dots,F_n)$ and $(F''_1,\dots,F''_n)$, we get that $\m{E}_{X}(ALG) \geq (1-\varepsilon) \m{E}_{X''}(ALG)\geq (1-\alpha)(1-\varepsilon)^2\m{E}_{X}(ALG)$, and the proposition is proved. 
%  Note that 
% \begin{equation}
%     (1-\varepsilon) (1-F''_i(x)) \leq 1-F_i(x) \leq (1+\varepsilon) (1-F''_i(x)),
% \end{equation}
% hence by Lemma \ref{lem:close2}, there the performance of such an algorithm against $F_1,\dots,F_n$ is at least $C^*(1-\alpha)^3(1-\varepsilon)$. 
\end{proof}

\section*{Step 2: small variables can be treated as i.i.d. variables}

Notice that, in any given instance, by replacing any set of distributions with their geometric mean, the distribution of the maximum does not change. In this section, we prove \Cref{prop:eq_iid}, which guarantees that if most random variables are $\varepsilon$-small, by treating these variables as i.i.d. realizations of the geometric mean of their distributions, we do not lose much in the competitive ratio. Recall that a random variable is $\varepsilon$-small if the probability it equals zero is at least $(1-\varepsilon)$.

%
%An important assumption is that an algorithm for an instance with repeated distributions treats all realizations of those distributions symmetrically. This is w.l.o.g., as we can add an extra randomization step by drawing at the beginning a random permutation of the repeated distributions, and pretending the order in which the corresponding realizations arrived was according to the permutation. Since those realizations are i.i.d., the random permutation does not alter their joint distribution, and the performance of the algorithm is the same.

For an instance $I=(F_1,\dots,F_n)$, an arrival order $\sigma$ and an algorithm $ALG$ defined as a sequence of thresholds, we denote by $ALG(I,\sigma)$ the reward obtained from applying $ALG$ to a sequence of variables drawn from $I$ with arrival order $\sigma$, i.e., to a sequence $X_{\sigma(1)},\dots,X_{\sigma(n)}$, where $X_1\sim F_1,\dots, X_n\sim F_n$ independently. We also denote by $I_\sigma$ the instance obtained by reordering $I$ according to $\sigma$.

\begin{proposition}
\label{prop:eq_iid}
    Given an instance $I=(F_1,\dots, F_n)$ and $s\in [n]$, let 
    $G=(\prod_{i=1}^s F_i)^{1/s}$ and define a new instance
    $I'=(G,G,\dots,G,F_{s+1},F_{s+2},\dots,F_n)$. If $s\geq n/2$, and, for a given $\varepsilon>1/\sqrt{n}$, all distributions $F_i$ with $i\leq s$ are $\varepsilon$-small, then for any deterministic algorithm $ALG$ (a sequence of thresholds) and permutation $\sigma$,
    $$
    \m{E}\Big(ALG(I_{\rho},\sigma)\Big) 
    \geq \m{E}\Big(ALG(I',\sigma) \Big) - O(\varepsilon)\cdot \m{E}\left(\max_{1\leq i\leq n} X_i\right),
    $$
    where $\rho$ is an independent random permutation of $[n]$ that restricted to the first $s$ indices is uniformly random, and for indices $i>s$ equals the identity.
\end{proposition}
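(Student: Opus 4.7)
My plan is to compare $\m{E}(ALG(I_\rho,\sigma))$ and $\m{E}(ALG(I',\sigma))$ by decomposing each into ``small-slot'' ($T\le s$) and ``big-slot'' ($T>s$) contributions. Without loss of generality I take $\sigma=\mathrm{Id}$ by relabeling slots.

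The proof will rely on two central facts. \textbf{Fact (i):} Jensen's inequality applied to the concave function $\log$ gives
\[
R_S:=\m{E}_\rho\!\Big[\prod_{r\in S}F_{\rho(r)}(\tau_r)\Big]\;\ge\;\prod_{r\in S}G(\tau_r)=:Q_S
\]
for every $S\subseteq[s]$ and thresholds $(\tau_r)_{r\in S}$, because $\m{E}_\rho[\log F_{\rho(r)}(\tau_r)]=\frac{1}{s}\sum_j\log F_j(\tau_r)=\log G(\tau_r)$. Thus $ALG$ is more likely to survive any subset of small slots on $I_\rho$ than on $I'$, matching the informal intuition that ``$ALG$ stops only earlier'' on the i.i.d.\ version $I'$. \textbf{Fact (ii):} since $G^s=\prod_{i\le s}F_i$, the maxima of the first $s$ coordinates have the same distribution in $I_\rho$ and $I'$.

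Using independence between the first $s$ and last $n-s$ coordinates, each expected reward decomposes as
\[
\m{E}(ALG(I_\rho))=\m{E}(Y_T\mathbb{1}_{T\le s})+R_{[s]}\cdot A,\qquad \m{E}(ALG(I'))=\m{E}(Z_{T'}\mathbb{1}_{T'\le s})+Q_{[s]}\cdot A,
\]
where $A$ is the expected reward of running $ALG$ with thresholds $\tau_{s+1},\ldots,\tau_n$ on the shared big variables $X_{s+1},\ldots,X_n$. By fact (i), $R_{[s]}A\ge Q_{[s]}A$, so the proposition reduces to showing $\m{E}(Z_{T'}\mathbb{1}_{T'\le s})-\m{E}(Y_T\mathbb{1}_{T\le s})\le O(\varepsilon)\m{E}(\max)$. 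For this I would use the $\varepsilon$-small decomposition $X_i=B_i\tilde X_i$ with $B_i\sim\mathrm{Bern}(p_i)$, $p_i\le\varepsilon$, and similarly $Z_i=C_i\tilde Z_i$ with $q:=1-G(0)\le\varepsilon$; a Taylor expansion of $(\prod F_j)^{1/s}$ shows that the positive-conditional distribution $\tilde G$ is close to the $p_i$-weighted mixture $\sum_j p_j\tilde F_j/\sum_k p_k$ up to multiplicative $O(\varepsilon)$ per term. Plugging this into slot-by-slot expansions of both quantities and invoking fact (ii) to cancel leading-order terms across slots, the residual error aggregates to $O(\varepsilon)\m{E}(\max_{i\le s}X_i)\le O(\varepsilon)\m{E}(\max)$.

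The main obstacle is the sharp control of this aggregated error. A naive per-slot error of order $\varepsilon$ summed over the $s\le n$ slots gives $O(s\varepsilon)$, and even a refined $O(\varepsilon^2)$ per-slot estimate yields $O(s\varepsilon^2)$, which under the hypothesis $\varepsilon\ge 1/\sqrt{n}$ can be as large as $O(1)$. The proof therefore cannot proceed by a slot-wise bound; it must use fact (ii) globally to cancel the leading-order discrepancies across the $s$ slots, leaving only a higher-order error multiplied by a globally bounded ``small-slot expected reward'', itself at most $\m{E}(\max_{i\le s}X_i)\le\m{E}(\max)$. Organizing this cancellation to produce a clean $O(\varepsilon)\m{E}(\max)$ bound is the technical heart of the argument.
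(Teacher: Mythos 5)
Your proposal correctly identifies some of the right ingredients—your Fact~(i) is essentially Lemma~\ref{lem:CSZ} (the CSZ survival-probability comparison, which you rederive via Jensen), and your ``Taylor expansion of the positive-conditional distribution'' is morally the content of Lemma~\ref{lem:eps_small_ineq}—but the architecture of the argument has two genuine gaps. First, the reduction to $\sigma=\mathrm{Id}$ is \emph{not} without loss of generality: the proposition has to hold for arbitrary $\sigma$, and in the prophet-secretary application the small and large variables are interleaved. Your two-block decomposition $\m{E}(ALG(I_\rho))=\m{E}(Y_T\mathds{1}_{T\leq s})+R_{[s]}\cdot A$ only makes sense when all small slots precede all large slots; for general $\sigma$ you need a slot-by-slot expansion as in \Cref{eq:sum_alg}, in which each slot's survival factor must be split into its small-index and large-index parts. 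Second, and more importantly, you misdiagnose the error structure. You worry that a per-slot bound gives an additive $O(\varepsilon)$ or $O(\varepsilon^2)$ per slot, summing to $O(s\varepsilon)$, and conclude that some global cancellation via Fact~(ii) is needed—but the paper's per-slot error is \emph{multiplicative}: each small-slot reward term is bounded below by a $(1-2\varepsilon)$ multiple of the corresponding term for $I'$ (from \Cref{lem:eps_small_ineq} and the $(s-1)$-versus-$s$ geometric-mean correction of \Cref{lem:geometric_mean}), plus a truly tiny additive error $O(\varepsilon/n^2)\,\m{E}(Y_{\sigma(i)})$. Summing the multiplicative terms just recovers $(1-2\varepsilon)\,\m{E}(ALG(I',\sigma))$, and the additive errors sum to $O(\varepsilon/n)\,\m{E}(\max)$. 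No global cancellation across slots is required, and Fact~(ii) plays no role in the paper's proof. The vague plan to ``invoke fact~(ii) to cancel leading-order terms'' would have to be made concrete, and I don't see how it would.

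You also miss a subtle but essential step: when you condition on the identity $i'$ of the small variable that stops the algorithm, the earlier small slots draw without replacement from $\{F_1,\ldots,F_s\}\setminus\{F_{i'}\}$, so the CSZ lemma gives a lower bound in terms of the geometric mean of only $s-1$ distributions, not $G$. Passing from the $(s-1)$-geometric-mean to $G$ costs a $(1-\varepsilon)$ factor and a negligible additive $(1-\varepsilon)^{s-1}$ term; this is the role of \Cref{lem:geometric_mean} and precisely where the hypotheses $s\geq n/2$ and $\varepsilon>1/\sqrt{n}$ enter. Your outline, which applies Fact~(i) to the full set $[s]$ unconditionally, doesn't encounter this and therefore cannot produce a correct per-slot bound on the stopping-reward terms.
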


%\begin{proposition} \label{prop:eq_iid}
%Given $n/2\leq s\leq n$ and $\varepsilon>1/\sqrt{n}$, consider two instances $(F_1,\dots,F_n)$ and $(G,G,\dots,G,F_{s+1},F_{s+2},\dots,F_n)$ such that $G=\left(\prod_{i=1}^s F_i\right)^{1/s}$ and $F_i$ is $\varepsilon$-small for all $i\leq s$.
%Let $ALG$ be a (randomized) threshold algorithm for the second instance with guarantee $\gamma$ that does not distinguish different realizations of $G$. If $ALG'$ is the algorithm for the first instance that results from applying $ALG$ as if the realizations of $F_1,\dots, F_s$ were realizations of $G$, then the guarantee of $ALG'$ is at least $\gamma-O(\varepsilon)$.
%\end{proposition}

This proposition implies that it is enough to design an algorithm for the case where the $\varepsilon$-small variables are i.i.d. If the $\varepsilon$-small variables arrive in uniformly random order, then we obtain almost the same expected reward, even if we condition on the arrivals of the large variables. Thus, if the expected reward of the algorithm is a $\gamma$ fraction of the expected maximum when the $\varepsilon$-small variables are i.i.d., its expected reward is at least a $\gamma-O(\varepsilon)$ fraction of the expected maximum when they are not.

\Cref{prop:eq_iid} is stated for deterministic algorithms and permutations to make it easier to read. However, note that by linearity of expectation, the inequality also holds for a randomized algorithm $ALG$ and a random permutation $\sigma$ (possibly non-uniform), even in the case where $ALG$ is arbitrarily correlated with $\sigma$. This means the proposition can be applied to any of the three variants.

%First, since we want to prove the inequality for every realization of $\sigma$, we assume $\sigma$ is an arbitrary deterministic permutation throughout the proof. Therefore, we can drop the conditioning on $\sigma$. Second, we also assume that $ALG$ is a deterministic sequence of thresholds. This is w.l.o.g. because any randomized algorithm can be reformulated as a randomly chosen deterministic algorithm, where the randomness is realized at the beginning and independently of everything else. Thus, if we prove the inequality for deterministic algorithms, the general result is also true by the linearity of the expectation. 

Before giving the proof, we restate a useful result from \cite{CSZ21} using our notation and prove two intermediate lemmas. Intuitively, \Cref{lem:CSZ} says that on $I'$, the algorithm stops earlier than on $I$. Next, \Cref{lem:eps_small_ineq} states that the expected reward we get from stopping with a uniformly random $\varepsilon$-small variable is at least a $(1-\varepsilon)$ fraction of the expected reward we get from a variable drawn from the geometric mean. These two lemmas are the main ingredients of the proof of the proposition. Lastly, \Cref{lem:geometric_mean} is a technical result about the geometric mean that allows us to apply \Cref{lem:CSZ} even if we condition on the arrival of one of the $\varepsilon$-small variables.

\begin{lemma}{\cite[Lemma~4.3]{CSZ21}}
    \label{lem:CSZ}
    Given distributions $F_1,\dots,F_s$ and thresholds $\tau_1,\dots,\tau_s$, define $G=\left(\prod_{i=1}^s F_i\right)^{1/s}$. If $\sigma:[s]\rightarrow [s]$ is a uniformly random permutation and $X_i\sim F_i$ for all $i\in [s]$ independently, then for every $k\in [s]$,
    \begin{align*}
        \m{P}( X_{\sigma(i)} \leq \tau_i \text{ for all } i\leq k) \geq  \prod_{i=1}^k G(\tau_i).
    \end{align*}
\end{lemma}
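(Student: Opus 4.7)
The plan is to reduce the lemma to a one–line application of Jensen's inequality to the concave function $\log$, using the symmetry of the uniformly random permutation. The first step is to condition on $\sigma$: since the variables $X_1,\dots,X_s$ are independent and $\sigma$ is a bijection, the variables $X_{\sigma(1)},\dots,X_{\sigma(k)}$ are independent given $\sigma$, so
\[
\m{P}\bigl(X_{\sigma(i)}\leq \tau_i \text{ for all } i\leq k\bigr)
= \m{E}_\sigma\!\left[\prod_{i=1}^k F_{\sigma(i)}(\tau_i)\right].
\]

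Next, I would apply Jensen's inequality. Since $\log$ is concave, $\m{E}[Y]\geq \exp(\m{E}[\log Y])$ for any positive random variable $Y$; applied to $Y=\prod_{i=1}^k F_{\sigma(i)}(\tau_i)$ this yields
\[
\m{E}_\sigma\!\left[\prod_{i=1}^k F_{\sigma(i)}(\tau_i)\right]
\geq \exp\!\left(\sum_{i=1}^k \m{E}_\sigma\!\left[\log F_{\sigma(i)}(\tau_i)\right]\right).
\]
Because $\sigma$ is uniform on permutations of $[s]$, the marginal distribution of $\sigma(i)$ is uniform on $[s]$ for every fixed $i$, so
\[
\m{E}_\sigma\!\left[\log F_{\sigma(i)}(\tau_i)\right]
= \frac{1}{s}\sum_{j=1}^s \log F_j(\tau_i) = \log G(\tau_i),
\]
by the definition of the geometric mean $G=(\prod_j F_j)^{1/s}$. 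Substituting and exponentiating gives the desired inequality $\m{E}_\sigma[\prod_i F_{\sigma(i)}(\tau_i)] \geq \prod_{i=1}^k G(\tau_i)$.

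The only mild subtlety is that some $F_j(\tau_i)$ might vanish, making $\log F_j(\tau_i)=-\infty$; but in that case $G(\tau_i)=0$, the right-hand side of the lemma is zero, and the inequality is trivial. So we may assume all relevant values are strictly positive and Jensen's inequality applies cleanly. I do not expect a real obstacle: the lemma is essentially an expression of the AM--GM inequality applied to the permanent-like sum on the left, and the random-order assumption supplies exactly the symmetric marginals needed to identify the Jensen lower bound with $\prod_i G(\tau_i)$.
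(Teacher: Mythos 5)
Your proof is correct. Conditioning on $\sigma$ and using independence gives the exact identity $\m{P}(X_{\sigma(i)}\leq\tau_i\ \forall i\leq k)=\m{E}_\sigma\bigl[\prod_{i=1}^{k}F_{\sigma(i)}(\tau_i)\bigr]$; Jensen's inequality yields $\m{E}_\sigma[Y]\geq\exp(\m{E}_\sigma[\log Y])$; linearity plus the fact that each $\sigma(i)$ is marginally uniform on $[s]$ identifies $\m{E}_\sigma[\log F_{\sigma(i)}(\tau_i)]$ with $\log G(\tau_i)$; and the degenerate case $F_j(\tau_i)=0$ is dispatched correctly. Note, however, that the paper does not prove this lemma at all --- it is quoted verbatim from [CSZ21, Lemma~4.3] and used as a black box --- so there is no in-paper proof to compare against. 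Your Jensen-via-uniform-marginals argument is a short, self-contained derivation of the cited statement.
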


\begin{lemma}
    \label{lem:eps_small_ineq}
    Given $\varepsilon$-small distributions $F_1,\dots,F_s$ and a threshold $\tau$, let $X_1,\dots,X_s\sim F_1\dots,F_s$ independently, and $Y\sim (\prod_{i=1}^s F_i)^{1/s}$. We have that
    \[
    \frac{1}{s}\sum_{i=1}^s \m{E}(X_i\cdot \mathds{1}_{X_i>\tau})\geq (1-\varepsilon)\cdot \m{E}(Y\cdot \mathds{1}_{Y>\tau}).
    \]
\end{lemma}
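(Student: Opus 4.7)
The plan is to reduce the statement to a pointwise comparison of tail functions using the standard tail-integral representation, and then to verify that pointwise inequality by exploiting $\varepsilon$-smallness through an elementary logarithmic estimate. Write $p_i(y) := 1 - F_i(y)$, $\bar p(y) := \tfrac{1}{s}\sum_{i=1}^s p_i(y)$, and $q(y) := 1 - \prod_{i=1}^s (1-p_i(y))^{1/s}$, which is the tail of $Y$. Using the identity $\m{E}(Z\,\mathds{1}_{Z>\tau}) = \tau\,\m{P}(Z>\tau) + \int_\tau^\infty \m{P}(Z>y)\,dy$ (valid for any nonnegative $Z$ and $\tau \geq 0$) and averaging over $i$, I would rewrite both sides as
\begin{align*}
\tfrac{1}{s}\sum_i \m{E}(X_i\,\mathds{1}_{X_i>\tau}) &= \tau\,\bar p(\tau) + \int_\tau^\infty \bar p(y)\,dy,\\
\m{E}(Y\,\mathds{1}_{Y>\tau}) &= \tau\, q(\tau) + \int_\tau^\infty q(y)\,dy.
\end{align*}
The claim then reduces to the pointwise inequality $\bar p(y) \geq (1-\varepsilon)\,q(y)$ for every $y \geq 0$.

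For the pointwise step, $\varepsilon$-smallness gives $\m{P}(X_i > 0) \leq \varepsilon$, hence $p_i(y) \leq \varepsilon$ for all $y \geq 0$ and all $i \leq s$. Combined with the elementary bound $-\ln(1-t) \leq t/(1-\varepsilon)$ for $t \in [0,\varepsilon]$, which follows by integrating $1/(1-t) \leq 1/(1-\varepsilon)$ on that interval, I would obtain
\[
\prod_i (1-p_i(y))^{1/s} = \exp\!\Bigl(\tfrac{1}{s}\sum_i \ln(1-p_i(y))\Bigr) \;\geq\; \exp\!\Bigl(-\tfrac{\bar p(y)}{1-\varepsilon}\Bigr).
\]
Applying $1 - e^{-x} \leq x$ to $x = \bar p(y)/(1-\varepsilon)$ then yields $q(y) \leq \bar p(y)/(1-\varepsilon)$, which is exactly the desired pointwise inequality. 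Plugging this in at the boundary $y=\tau$ and integrating on $[\tau,\infty)$ closes the argument.

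The only real subtlety is that the inequality runs \emph{against} the AM--GM direction: for arbitrary $p_i \in [0,1]$ one has $\prod_i (1-p_i)^{1/s} \leq 1-\bar p$, i.e.\ $q(y) \geq \bar p(y)$, so no purely convexity-based argument can work. The $\varepsilon$-smallness assumption is precisely what allows one to invert this at the cost of a $(1-\varepsilon)$ factor: it controls the Taylor remainder of $\ln(1-p_i)$ by $\varepsilon$ times the first-order term, which is the source of the multiplicative loss in the statement. I expect this is the main conceptual step; the rest is bookkeeping with the tail-integral representation.
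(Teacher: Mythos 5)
Your proposal is correct and follows essentially the same route as the paper: both use the tail-integral representation $\m{E}(Z\mathds{1}_{Z>\tau}) = \tau\,\m{P}(Z>\tau) + \int_\tau^\infty \m{P}(Z>t)\,dt$ to reduce the claim to the pointwise inequality $\tfrac{1}{s}\sum_i(1-F_i(y)) \geq (1-\varepsilon)\bigl(1-(\prod_i F_i(y))^{1/s}\bigr)$, and both prove that pointwise inequality by comparing arithmetic mean to geometric mean via the logarithm, using $\varepsilon$-smallness to control the first-order Taylor error of $\log(1-t)$ on $[0,\varepsilon]$. The only difference is cosmetic: the paper bounds $\bar p \geq \tfrac{1}{1+\varepsilon}(-\log \mathrm{GM}) \geq \tfrac{1}{1+\varepsilon}(1-\mathrm{GM})$, whereas you first lower-bound $\mathrm{GM} \geq \exp(-\bar p/(1-\varepsilon))$ and then apply $1-e^{-x}\leq x$, which is the same chain traversed in the opposite order.
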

\begin{proof}
    We can rewrite the left-hand side of the inequality as
    \begin{align}
        &\frac{1}{s}\sum_{i=1}^s \m{E}(X_i\cdot \mathds{1}_{X_i>\tau})\notag \\
        &=
        \frac{1}{s}\sum_{i=1}^s
        \left(
        \tau(1-F_i(\tau))+\int_{\tau}^\infty (1-F_i(x)) \,dx
        \right)\notag \\
        &=\tau\cdot \frac{1}{s}\sum_{i=1}^s\left(1- F_i(\tau) \right) + \int_{\tau}^\infty
        \frac{1}{s}\sum_{i=1}^s (1-F_i(x)) \, dx. \label{eq:eps_small_ineq}
    \end{align}
    Now, from the fact that for $y\in [1-\varepsilon,1]$ it holds that $(1-y)\leq -\log(y)\leq (1+\varepsilon)\cdot (1-y)$, we have that for all $x\geq 0$,
    \begin{align*}
        \frac{1}{s}\sum_{i=1}^s (1-F_i(x))&\geq
        \frac{1}{1+\varepsilon}\cdot \frac{1}{s}\sum_{i=1}^s -\log F_i(x)\\
        &=\frac{1}{1+\varepsilon}\left(-\log\left(\prod_{i=1}^s F_i(x)\right)^{1/s} \right)\\
        &\geq \frac{1}{1+\varepsilon}\left(
        1-\left(\prod_{i=1}^s F_i(x)\right)^{1/s}\right)\\
        &\geq (1-\varepsilon)
        \left(
        1-\left(\prod_{i=1}^s F_i(x)\right)^{1/s}\right).
    \end{align*}
    Replacing this back in \Cref{eq:eps_small_ineq}, we obtain that
    \begin{align*}
        &\frac{1}{s}\sum_{i=1}^s \m{E}(X_i\cdot \mathds{1}_{X_i>\tau})\\
        &\geq
        (1-\varepsilon)\cdot \left( \tau\cdot
        \left(
        1-\left(\prod_{i=1}^s F_i(\tau)\right)^{1/s}\right)
        +\int_{\tau}^\infty
        \left(
        1-\left(\prod_{i=1}^s F_i(x)\right)^{1/s}\right) \, dx \right)\\
        &= (1-\varepsilon)\cdot \m{E}(Y\cdot \mathds{1}_{Y>\tau}),
    \end{align*}
    which concludes the proof of the lemma.
\end{proof}

\begin{lemma}
    \label{lem:geometric_mean}
    If $n\geq 1/\varepsilon^2$ and $s\geq n/2$, then for all $x_1,\dots x_s\in [0,1]$, it holds that
    \[
    \left(\prod_{i=1}^{s-1} x_i\right)^{1/(s-1)}
    \geq
    (1-\varepsilon)\cdot \left(\prod_{i=1}^s x_i \right)^{1/s} - O(\varepsilon/n^2).
    \]
\end{lemma}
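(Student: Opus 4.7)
My plan is to prove the lemma by reducing the comparison of the two geometric means to a single scalar quantity and performing a dichotomy on its size. The degenerate cases where some $x_i$ vanishes are immediate: if $x_i = 0$ for some $i \leq s-1$ both sides vanish, and if only $x_s = 0$ the right-hand side is nonpositive while the left is nonnegative. So I assume $x_i > 0$ throughout.

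Writing $P_k := \prod_{i=1}^k x_i$, the identity $P_{s-1} = P_s / x_s$ rearranges to
\[
\frac{P_{s-1}^{1/(s-1)}}{P_s^{1/s}} \;=\; \frac{P_s^{1/(s(s-1))}}{x_s^{1/(s-1)}} \;\geq\; P_s^{1/(s(s-1))},
\]
where the inequality uses $x_s \leq 1$. This reduces the problem to controlling the scalar quantity $P_s^{1/(s(s-1))}$. I now split into two cases. If $P_s^{1/(s(s-1))} \geq 1-\varepsilon$, the desired multiplicative bound follows immediately with no additive slack needed. Otherwise, $P_s < (1-\varepsilon)^{s(s-1)}$, hence
\[
P_s^{1/s} < (1-\varepsilon)^{s-1} \leq e^{-\varepsilon(s-1)},
\]
and since $P_{s-1}^{1/(s-1)} \geq 0$, it suffices to show that $e^{-\varepsilon(s-1)} = O(\varepsilon/n^2)$.

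This final exponential-versus-polynomial comparison is the main (and essentially only) difficulty, and I expect to spend the most care on it. The hypotheses $s \geq n/2$ and $n\varepsilon^2 \geq 1$ give $\varepsilon(s-1) \geq \varepsilon n/2 - \varepsilon$ and $\varepsilon \geq 1/\sqrt{n}$. Setting $u := \varepsilon n / 2$, the second inequality becomes $\varepsilon \geq 1/(2u)$, which in turn yields $n^2/\varepsilon = 4u^2/\varepsilon^3 \leq 32 u^5$, and therefore
\[
e^{-\varepsilon(s-1)} \cdot \frac{n^2}{\varepsilon} \;\leq\; e^{-u + \varepsilon} \cdot 32 u^5 \;\leq\; 32 e \cdot u^5 e^{-u},
\]
whose maximum over $u > 0$ (attained at $u = 5$) is an absolute constant. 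The subtle point is that this argument must give a single universal constant in the $O(\cdot)$ across the entire admissible region $\{n\varepsilon^2 \geq 1\}$, not merely in the asymptotic regime $\varepsilon \to 0$: the change of variables to $u$ is precisely what collapses the two-parameter problem to one parameter and makes this uniformity transparent.
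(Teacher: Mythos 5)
Your proof is correct and follows essentially the same decomposition as the paper's: write $P_{s-1}^{1/(s-1)} \geq P_s^{1/(s(s-1))}\cdot P_s^{1/s}$, then split on whether the factor $P_s^{1/(s(s-1))}$ exceeds $1-\varepsilon$. The only difference is that you carry out explicitly the final exponential-vs-polynomial comparison (via the substitution $u=\varepsilon n/2$) that the paper simply asserts in a footnote as "of course much smaller," so yours is a more complete version of the same argument.
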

\begin{proof}
    We have that since $\frac{1}{s-1}-\frac{1}{s} = \frac{1}{s\cdot (s-1)}$,
    \begin{align*}
        \left(\prod_{i=1}^{s-1} x_i\right)^{1/(s-1)}&\geq
        \left(\prod_{i=1}^{s} x_i\right)^{1/(s-1)}\\
        &=\left(\prod_{i=1}^{s} x_i\right)^{\frac{1}{s-1}-\frac{1}{s}+\frac{1}{s}}\\
        &=\left(\left(\prod_{i=1}^{s} x_i\right)^{1/s}\right)^{1/(s-1)} \cdot \left(\prod_{i=1}^{s} x_i\right)^{1/s}.
    \end{align*}
    To conclude, notice that if
    \[
    \left(\left(\prod_{i=1}^{s} x_i\right)^{1/s}\right)^{1/(s-1)}<(1-\varepsilon),
    \]
    then
    \[
    \left(\prod_{i=1}^{s} x_i\right)^{1/s}<(1-\varepsilon)^{s-1}\leq (1-\varepsilon)^{n/2-1},
    \]
    which is in $O(\varepsilon/n^2)$. \footnote{It is of course much smaller, but this bound will be sufficient for the proof.}
    
\end{proof}

\begin{proof}[Proof of \Cref{prop:eq_iid}]
    We denote by $\tau_1,\dots,\tau_n$ the sequence of thresholds that define $ALG$.
    Let $(X_1,\dots X_n)$ be a sequence drawn from $(F_1,\dots,F_n)$ and $(Y_1,\dots,Y_n)$ a sequence drawn from $(G,\dots,G,F_{s+1},\dots,F_n)$, all independent. By linearity of expectation, we have that
    \begin{align}
    \m{E}\Big(ALG(I_\rho,\sigma)\Big)
    &= \sum_{i=1}^n \m{E}\left(X_{\rho(\sigma(i))}\cdot \mathds{1}_{\{X_{\rho(\sigma(i))}\geq \tau_i\}}\cdot
    \mathds{1}_{\{X_{\rho(\sigma(j))} < \tau_j, \forall j<i  \}}\right). \label{eq:sum_alg}
    \end{align}
    We now analyze separately the terms of this sum that correspond to $\varepsilon$-small variables, i.e., for which $\sigma(i)\leq s$ and the terms that correspond to large variables, i.e., where $\sigma(i)>s$. For $i$ such that $\sigma(i)\leq s$,
    \begin{align}
        &\m{E}\left(X_{\rho(\sigma(i))}\cdot \mathds{1}_{\{X_{\rho(\sigma(i))}\geq \tau_i\}}\cdot
    \mathds{1}_{\{X_{\rho(\sigma(j))} < \tau_j, \forall j<i  \}}\right)
    \notag\\
    &= \frac{1}{s}\sum_{i'=1}^s 
    \m{E}\left(X_{i'}\cdot \mathds{1}_{\{X_{i'}\geq \tau_i\}}\cdot
    \mathds{1}_{\{X_{\rho(\sigma(j))} < \tau_j, \forall j<i  \}}\,\middle|\, \rho(\sigma(i))=i' \right)
    \notag\\
    &=\frac{1}{s}\sum_{i'=1}^s 
    \m{E}\left(X_{i'}\cdot \mathds{1}_{\{X_{i'}\geq \tau_i\}}\right) \cdot \m{P}\left(X_{\rho(\sigma(j))} < \tau_j, \forall j<i \,\middle|\, \rho(\sigma(i))=i' \right).
    \label{eq:small_term_in_expectation}
    \end{align}
    Now, splitting again into small and large variables and applying \Cref{lem:CSZ}, we have that
    \begin{align*}
        &\m{P}\left(X_{\rho(\sigma(j))} < \tau_j, \forall j<i \,\middle|\, \rho(\sigma(i))=i' \right)\\
        &= \m{P}\left(X_{\sigma(j)} < \tau_j, \forall j<i: \sigma(j)>s \right)
        \\
        & \qquad \qquad \cdot
        \m{P}\left(X_{\rho(\sigma(j))} < \tau_j, \forall j<i : \sigma(j)\leq s \,\middle|\, \rho(\sigma(i))=i' \right)
        \\
        &\geq \m{P}\left(X_{\sigma(j)} < \tau_j, \forall j<i: \sigma(j)>s \right)
        \cdot \prod_{j<i:\sigma(j)\leq s} \left( 
        \prod_{j'\leq s: j'\neq i'} F_{j'}(\tau_j)\right)^{1/(s-1)}
        \\
        &= \m{P}\left(X_{\sigma(j)} < \tau_j, \forall j<i: \sigma(j)>s \right)
        \cdot \left(   
        \prod_{j'\leq s: j'\neq i'} \left(\prod_{j<i:\sigma(j)\leq s} F_{j'}(\tau_j)\right) \right)^{1/(s-1)}
        \\
        &\geq
        \m{P}\left(X_{\sigma(j)} < \tau_j, \forall j<i: \sigma(j)>s \right)
        \cdot (1-\varepsilon)\cdot 
        \left(   
        \prod_{j'\leq s} \left(\prod_{j<i:\sigma(j)\leq s} F_{j'}(\tau_j)\right) \right)^{1/s}
        \\
        &\qquad - O(\varepsilon/n^2)
        \\
        &=
        (1-\varepsilon)\cdot \m{P}\left(Y_{\sigma(j)} < \tau_j, \forall j<i  \right) - O(\varepsilon/n^2),
    \end{align*}
    where in the last inequality we applied \Cref{lem:geometric_mean}, and in the last equality we used the definition of $Y$ and $\rho$. Replacing the last inequality back in \Cref{eq:small_term_in_expectation}, and then applying \Cref{lem:eps_small_ineq} we have that
    if $\sigma(i)\leq s$,
    \begin{align*}
        &\m{E}\left(X_{\rho(\sigma(i))}\cdot \mathds{1}_{\{X_{\rho(\sigma(i))}\geq \tau_i\}}\cdot
    \mathds{1}_{\{X_{\rho(\sigma(j))} < \tau_j, \forall j<i  \}}\right)\\
    &\geq 
    \left( (1-\varepsilon)\cdot \m{P}\left(Y_{\sigma(j)} < \tau_j, \forall j<i  \right) - O(\varepsilon/n^2)
    \right) \cdot 
    \frac{1}{s}\sum_{i'=1}^s 
    \m{E}\left(X_{i'}\cdot \mathds{1}_{\{X_{i'}\geq \tau_i\}}\right) \\
    &\geq 
     (1-2\varepsilon)\cdot \m{P}\left(Y_{\sigma(j)} < \tau_j, \forall j<i  \right)\cdot \m{E}\left(Y_{\sigma(i)}\cdot \mathds{1}_{\{Y_{\sigma(i)} \geq \tau_{i}\}}\right) 
      - O(\varepsilon/n^2)
     \cdot \m{E}\left(Y_{\sigma(i)}\right).
    \end{align*}
    Now, consider $i$ such that $\sigma(i)> s$. By the definition of $\rho$, $\rho(\sigma(i))=\sigma(i)$, and by the definition of $Y$, we have that $Y_{\sigma(i)}\sim X_{\sigma(i)}$. Therefore,
    \begin{align}
        &\m{E}\left(X_{\rho(\sigma(i))}\cdot \mathds{1}_{\{X_{\rho(\sigma(i))}\geq \tau_i\}}\cdot
    \mathds{1}_{\{X_{\rho(\sigma(j))} < \tau_j, \forall j<i  \}}\right)
    \notag\\
    &=
    \m{E}\left(X_{\sigma(i)}\cdot \mathds{1}_{\{X_{\sigma(i)}\geq \tau_i\}}\cdot
    \mathds{1}_{\{X_{\rho(\sigma(j))} < \tau_j, \forall j<i  \}} \right)
    \notag\\
    &=
    \m{E}\left(Y_{\sigma(i)}\cdot \mathds{1}_{\{Y_{\sigma(i)}\geq \tau_i\}}\right)\cdot \m{P}\left(
    X_{\rho(\sigma(j))} < \tau_j, \forall j<i   \right).
    \label{eq:large_term}
    \end{align}
    Like before, we split $j$ into small and large variables, and then apply \Cref{lem:CSZ} to obtain that
    \begin{align*}
        &\m{P}\left(
    X_{\rho(\sigma(j))} < \tau_j, \forall j<i   \right)\\
    &= \m{P}\left(
    X_{\sigma(j)} < \tau_j, \forall j<i : \sigma(j)>s  \right) \cdot
    \m{P}\left(
    X_{\rho(\sigma(j))} < \tau_j, \forall j<i  : \sigma(j)\leq s \right)
    \\
    &\geq
    \m{P}\left(
    X_{\sigma(j)} < \tau_j, \forall j<i : \sigma(j)>s  \right) \cdot
    \prod_{j<i: \sigma(j)\leq s} G(\tau_j)
    \\
    &=
    \m{P}(Y_{\sigma(j)} < \tau_j, \forall j<i).
    \end{align*}
    Replacing this back into \Cref{eq:large_term}, we have that if $\sigma(i)>s$,
    \begin{align*}
        &\m{E}\left(X_{\rho(\sigma(i))}\cdot \mathds{1}_{\{X_{\rho(\sigma(i))}\geq \tau_i\}}\cdot
    \mathds{1}_{\{X_{\rho(\sigma(j))} < \tau_j, \forall j<i  \}}\right)
    \\
    & \qquad
    \geq 
    \m{E}\left(Y_{\sigma(i)}\cdot \mathds{1}_{\{Y_{\sigma(i)}\geq \tau_i\}}\right)\cdot \m{P}\left(
    Y_{\sigma(j)} < \tau_j, \forall j<i   \right).
    \end{align*}
    Putting everything back in \Cref{eq:sum_alg}, we get that
    \begin{align*}
        \m{E}\Big(ALG(I_\rho,\sigma)\Big)
        &\geq 
        (1-2\varepsilon)\cdot \sum_{i=1}^n 
        \m{E}\left(Y_{\sigma(i)}\cdot \mathds{1}_{\{Y_{\sigma(i)}\geq \tau_i\}}\right)\cdot \m{P}\left(
    Y_{\sigma(j)} < \tau_j, \forall j<i   \right)
    \\
    &\qquad - O(\varepsilon/n^2)\sum_{i=1}^n \m{E}(Y_i)\\
    &\geq \m{E}\Big(ALG(I',\sigma)\Big) - O(\varepsilon)\cdot  \m{E}\left(\max_{1\leq i\leq n} Y_i \right).
    \end{align*}
    This concludes the proof of the proposition, as $\max_{1\leq i\leq n} X_i$ and $\max_{1\leq i\leq n} Y_i$ are identically distributed.
\end{proof}

\section*{Step 3: sample-based approximation of the distributions} %and proof of Theorem \ref{main_thm}}
Equipped with the machinery developed in the first two steps, let us go back to our main goal: proving Theorem \ref{main_thm}. In this step, we focus on proving that $k \geq O(\varepsilon^{-5})$ is enough to build an algorithm that guarantees a ratio $C^*-\varepsilon$. The fact that it can found in polynomial time in $n$ will be done in the last step.

Recall that the gambler faces an instance $X_1,\dots,X_n$, with unknown distributions $F_1,\dots,F_n$, and has access to samples $S^j_i, i \in [n], j \in [k]$. We fix some $\varepsilon>0$. 
The first sub-step is to detect variables that are not $\varepsilon$-small. The second sub-step is to estimate distributions of an auxiliary instance where all $\varepsilon$-small variables have been replaced by an i.i.d. distribution, and apply Proposition \ref{prop:close}. The proof then stems from Proposition \ref{prop:eq_iid}. 

Throughout this step, we assume the distributions are all absolutely continuous. If this is not the case, we can always approximate the distributions with absolutely continuous distributions. One way to do this is to first draw a single sample of $X^*$ and then add i.i.d. noise drawn from a Uniform$[0,\delta]$ to every subsequent sample and every realization, where $\delta=\varepsilon^2\cdot X^*$. By Markov's inequality, the probability that $X^*>\m{E}(X^*)/\varepsilon$ is at most $\varepsilon$. Therefore, with probability at least $(1-\varepsilon)$, the extra noise is only an $\varepsilon$ fraction of the expected maximum.

%The proof is then achieved by applying the latter proposition to the auxiliary instance, with Proposition \ref{prop:close}. 
%The last sub-step concludes the proof. 
\subsection*{Classification of variables}
Let $T^*$ be such that $\prod_{i=1}^n F_i(T^*) = \varepsilon$. We show in the following lemma that we can basically ignore all values below $T^*$.
\\
%\textcolor{red}{Assuming variables have continuous distribution?}
\begin{lemma} \label{lem:lowertail}
    If $\prod_{i=1}^n F_i(T^*) = \varepsilon$, then
    \[
    (1-\varepsilon) \cdot \m{E}\left(\max_{1\leq i\leq n} X_i \right) \leq \m{E} \left(\max_{1\leq i\leq n} X_i\cdot \mathds{1}_{\{T^*<X_i\}}\right). 
    \]
\end{lemma}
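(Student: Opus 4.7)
The plan is to reduce the inequality to a statement about $X^* := \max_{1\leq i\leq n} X_i$ alone, and then invoke the tail-integral formula for the expectation.

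First, I would observe the pointwise identity
\[
\max_{1\leq i\leq n} X_i \cdot \mathds{1}_{\{T^* < X_i\}} \;=\; X^* \cdot \mathds{1}_{\{X^* > T^*\}}.
\]
On the event $\{X^* > T^*\}$, the maximizing index $i^*$ satisfies $X_{i^*}\mathds{1}_{\{X_{i^*}>T^*\}} = X^*$, and no other index contributes more to the outer maximum; on the complementary event every term vanishes. Hence the lemma reduces to showing $(1-\varepsilon)\m{E}(X^*) \leq \m{E}(X^* \mathds{1}_{\{X^*>T^*\}})$.

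Next, I would express both sides through the survival function $G(t):=\m{P}(X^* > t) = 1-\prod_{i=1}^n F_i(t)$, which by hypothesis satisfies $G(T^*)=1-\varepsilon$. Setting $A := \int_{T^*}^\infty G(t)\,dt$, the tail-integral formula gives
\[
\m{E}(X^*) \;=\; \int_0^{T^*} G(t)\,dt + A \qquad \text{and} \qquad \m{E}(X^*\mathds{1}_{\{X^*>T^*\}}) \;=\; T^*G(T^*)+A \;=\; (1-\varepsilon)T^* + A.
\]

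Since $G \leq 1$ forces $\int_0^{T^*} G(t)\,dt \leq T^*$, I obtain
\[
\frac{\m{E}(X^*\mathds{1}_{\{X^*>T^*\}})}{\m{E}(X^*)} \;\geq\; \frac{(1-\varepsilon)T^* + A}{T^* + A} \;=\; 1 - \frac{\varepsilon T^*}{T^* + A} \;\geq\; 1-\varepsilon,
\]
which is the desired conclusion.

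There is no real obstacle in this argument. The only small subtlety worth flagging is that the crude bound $\m{E}(X^*\mathds{1}_{\{X^*\leq T^*\}}) \leq \varepsilon T^*$ combined with the Markov-type lower bound $\m{E}(X^*) \geq (1-\varepsilon)T^*$ would lose an extra factor of $1/(1-\varepsilon)$; the tight factor $(1-\varepsilon)$ stated in the lemma comes out of the calculation above because the upper-tail contribution $A$ appears identically in numerator and denominator and can only improve the ratio.
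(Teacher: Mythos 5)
Your proof is correct, and it takes a genuinely different route from the paper. The paper conditions on the event $\{X^*\leq T^*\}$, writes $\m{E}(X^*\mathds{1}_{\{X^*\leq T^*\}}) = \m{E}(X^*\mid X^*\leq T^*)\cdot\m{P}(X^*\leq T^*)$, and then uses the (unstated but standard) fact that $\m{E}(X^*\mid X^*\leq T^*)\leq\m{E}(X^*)$ together with $\m{P}(X^*\leq T^*)=\varepsilon$ to bound the small-value contribution by $\varepsilon\cdot\m{E}(X^*)$; rearranging gives the lemma. You instead invoke the tail-integral formula and compare $\m{E}(X^*)$ and $\m{E}(X^*\mathds{1}_{\{X^*>T^*\}})$ directly via the survival function $G$, exploiting that the common term $A=\int_{T^*}^\infty G$ appears in both expressions and can only improve the ratio. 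Both arguments are elementary and equally rigorous; yours is more computational and makes the tightness of the $(1-\varepsilon)$ factor transparent, while the paper's is shorter but relies on a stochastic-dominance fact it does not spell out. A separate virtue of your write-up is the opening pointwise identity $\max_i\bigl(X_i\mathds{1}_{\{X_i>T^*\}}\bigr)=X^*\mathds{1}_{\{X^*>T^*\}}$: this resolves the mild notational ambiguity in the paper's statement (the indicator sitting under the $\max$), and it is needed to make the paper's own first displayed equality literally true, since the analogous identity for $\mathds{1}_{\{X_i\leq T^*\}}$ fails.
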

\begin{proof}
    We split the expectation into values below $T^*$ and values above $T^*$.
    \begin{align*}
        \m{E}\left(\max_{1\leq i\leq n} X_i \right)
        &= \m{E} \left(\max_{1\leq i\leq n} X_i\cdot \mathds{1}_{\{X_i\leq T^*\}}\right)
        + \m{E} \left(\max_{1\leq i\leq n} X_i\cdot \mathds{1}_{\{T^*<X_i\}}\right)\\
        &= \m{E} \left(\max_{1\leq i\leq n} X_i \,\middle|\,   \max_{1\leq i\leq n} X_i\leq T^*\right)\cdot \m{P}\left(\max_{1\leq i\leq n} X_i\leq T^*\right) \\
        & \qquad + \m{E} \left(\max_{1\leq i\leq n} X_i\cdot \mathds{1}_{\{T^*<X_i\}}\right)\\
        &\leq \varepsilon\cdot \m{E}\left(\max_{1\leq i\leq n} X_i \right)
        +\m{E} \left(\max_{1\leq i\leq n} X_i\cdot \mathds{1}_{\{T^*<X_i\}}\right).
    \end{align*}
    Rearranging the terms concludes the proof of the lemma.
\end{proof}

Using $k_1$ samples of each distribution, we get $k_1$ samples of the distribution of the maximum, i.e., of  $\prod_{i=1}^n F_i$. We estimate $T^*$ from below with the $\lfloor k_1\cdot\varepsilon\cdot  (1-\varepsilon)\rfloor$-th smallest of the $k_1$ samples of the maximum and call this quantity $T$. The following lemma establishes how large we have to set $k_1$ so that $T$ is a good estimate of $T^*$.

\begin{lemma}
    \label{lem:correct_T}
    If $k_1\geq 6(1/\varepsilon)^3\cdot \log(1/\varepsilon)$, then with probability at least $1-\varepsilon$,
    \[
    (1-\varepsilon)^2\cdot \varepsilon \leq \prod_{i=1}^n F_i(T) \leq
     \varepsilon.
    \]
\end{lemma}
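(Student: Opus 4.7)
The plan is to view $T$ as an order statistic of i.i.d.\ samples from $H:=\prod_{i=1}^n F_i$ and reduce the two inequalities to Chernoff tail bounds on binomial counts. Let $Y_1,\dots,Y_{k_1}$ be the $k_1$ samples of the maximum (i.i.d.\ copies of $H$) and let $m:=\lfloor k_1\varepsilon(1-\varepsilon)\rfloor$, so that $T$ is the $m$-th smallest $Y_j$. By the continuity assumption on the $F_i$, we can define $T^*$ and $T_*$ as the unique values with $H(T^*)=\varepsilon$ and $H(T_*)=(1-\varepsilon)^2\varepsilon$.

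Next, I would rewrite each half of the desired inequality as an order-statistic event. The upper bound $\prod_{i=1}^n F_i(T)\leq\varepsilon$ is equivalent to $T\leq T^*$, which in turn is equivalent to the event that at least $m$ of the $Y_j$ fall below $T^*$; that count is $N_1\sim\text{Bin}(k_1,\varepsilon)$ with mean $k_1\varepsilon$. Since $m\leq k_1\varepsilon-k_1\varepsilon^2$, the bad event $\{N_1<m\}$ demands a lower deviation of at least a multiplicative $(1-\varepsilon)$ factor. Symmetrically, the lower bound $\prod_{i=1}^n F_i(T)\geq(1-\varepsilon)^2\varepsilon$ is equivalent to $T\geq T_*$, which is equivalent to $\{N_2<m\}$ where $N_2\sim\text{Bin}(k_1,(1-\varepsilon)^2\varepsilon)$ with mean $k_1\varepsilon(1-\varepsilon)^2$. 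The bad event $\{N_2\geq m\}$ then demands an upper deviation of a multiplicative factor essentially $1/(1-\varepsilon)\approx 1+\varepsilon$.

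It remains to apply standard Chernoff bounds. The lower-tail bound $\m{P}[X\leq(1-\delta)\mu]\leq \exp(-\delta^2\mu/2)$ applied to $N_1$ with $\delta=\varepsilon$ and $\mu=k_1\varepsilon$ yields $\m{P}[N_1<m]\leq\exp(-k_1\varepsilon^3/2)$. The upper-tail bound $\m{P}[X\geq(1+\delta)\mu]\leq\exp(-\delta^2\mu/3)$ applied to $N_2$ with $\delta=\varepsilon/(1-\varepsilon)$ and $\mu=k_1\varepsilon(1-\varepsilon)^2$ yields $\m{P}[N_2\geq m]\leq\exp(-k_1\varepsilon^3/3)$ (the $-1$ from the floor is absorbed for $k_1$ of the stated order). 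Plugging in $k_1\geq 6\varepsilon^{-3}\log(1/\varepsilon)$ makes each tail at most $\varepsilon/2$, and a union bound concludes the proof.

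There is not much of a true obstacle here: the only care required is the bookkeeping around the floor function and the continuity of $H$ (which guarantees that $T^*$ and $T_*$ are attained and that $\{T\leq T^*\}$ coincides with $\{N_1\geq m\}$ almost surely). The constant $6$ in the statement tracks the worse of the two Chernoff exponents after absorbing the union-bound factor.
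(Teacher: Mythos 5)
Your proof is correct and takes essentially the same route as the paper: both introduce the lower threshold (your $T_*$, the paper's $T_{\textsc{low}}$) with $\prod_i F_i(T_{\textsc{low}})=(1-\varepsilon)^2\varepsilon$, rewrite the sandwich $T_{\textsc{low}}\leq T\leq T^*$ as two binomial-tail events on the counts of samples of the maximum below each threshold, and close with Chernoff bounds and a union bound.
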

\begin{proof}
    Denote by $T_{\textsc{low}}$ the value such that $\prod_{i=1}^n F_i(T_{\textsc{low}}) = (1-\varepsilon)^2\cdot \varepsilon$. The statement of the lemma is equivalent to
    \begin{align*}
        \m{P}(T_{\textsc{low}}\leq T \leq T^*) \geq 1-\varepsilon.
    \end{align*}
    Let $M_{\textsc{low}}$ be the number of samples of the maximum below $T_{\textsc{low}}$, and $M^*$ be the number of samples below $T^*$. We have that
    \begin{align*}
        \m{P}(T_{\textsc{low}}\leq T \leq T^*)&=\m{P}(M_{\textsc{low}} \leq \lfloor k_1\varepsilon (1-\varepsilon)\rfloor
        \leq M^*)\\
        &\geq 1- \m{P}(M_{\textsc{low}}>k_1 \varepsilon (1-\varepsilon))
        - \m{P}(M^*< k_1 \varepsilon (1-\varepsilon)).
    \end{align*}
    Since $\m{E}(M_{\textsc{low}})=k_1\varepsilon(1-\varepsilon)^2$ and $\m{E}(M^*)= k_1\varepsilon$, Chernoff bounds imply that
    \begin{align*}
        \m{P}(T_{\textsc{low}}\leq T \leq T^*)&\geq
        1-e^{-\varepsilon^2 k_1 \varepsilon/(2+\varepsilon)} -e^{-\varepsilon^2k_1\varepsilon/2}\\
        &\geq 1-2\varepsilon^2,
    \end{align*}
    which for small $\varepsilon$ is at least $1-\varepsilon$.
\end{proof}

Let $L^*=\{i: 1-F_i(T^*) > \varepsilon \}$. We draw $k_2$ fresh samples of each distribution, and for each $i=1,\ldots,n$ denote as $\hat{F}_i$ the empirical distribution that results from them. Let $L=\{i: 1-\hat{F}_i(T)>(1-\varepsilon)\cdot \varepsilon \}$, and denote by $A$ the event associated with \Cref{lem:correct_T}.

\begin{lemma}
    If $k_2\geq 6(1/\varepsilon)^3\cdot \log(1/\varepsilon)$, then conditional on event $A$, with probability at least $(1-\varepsilon)$,
    \[
    L^* \subseteq L, \text{ and } |L|\leq O((1/\varepsilon)\log(1/\varepsilon)).
    \]
\end{lemma}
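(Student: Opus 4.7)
The plan is to work throughout conditional on event $A$ and extract two useful facts. First, since each $F_i$ is nondecreasing, $\prod_{i=1}^n F_i(T) \leq \varepsilon = \prod_{i=1}^n F_i(T^*)$ forces $T \leq T^*$. Second, the inequality $-\log x \geq 1-x$ on $(0,1]$ gives
\[
\sum_{i=1}^n (1-F_i(T)) \;\leq\; -\log \prod_{i=1}^n F_i(T) \;\leq\; -\log\bigl((1-\varepsilon)^2\varepsilon\bigr) \;=\; O(\log(1/\varepsilon)).
\]
Applying the same trick at $T^*$ yields $|L^*| \leq \log(1/\varepsilon)/\varepsilon$, since every $i \in L^*$ contributes $-\log F_i(T^*) \geq -\log(1-\varepsilon) \geq \varepsilon$ to the identity $\sum_i -\log F_i(T^*) = \log(1/\varepsilon)$.

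For the inclusion $L^* \subseteq L$, I would use that the $k_2$ fresh samples are independent of $T$, so conditional on $T$, $k_2(1-\hat F_i(T))$ is $\operatorname{Bin}(k_2, 1-F_i(T))$. For $i \in L^*$, $1-F_i(T) \geq 1-F_i(T^*) > \varepsilon$ by $T \leq T^*$, and the multiplicative Chernoff bound (lower tail) gives
\[
\m{P}\bigl(1-\hat F_i(T) \leq (1-\varepsilon)(1-F_i(T))\bigr) \;\leq\; e^{-\varepsilon^3 k_2/2} \;\leq\; \varepsilon^3
\]
as soon as $k_2 \geq 6\varepsilon^{-3}\log(1/\varepsilon)$. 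On the complement, $1-\hat F_i(T) > (1-\varepsilon)\varepsilon$, i.e., $i \in L$. A union bound over the at-most $\log(1/\varepsilon)/\varepsilon$ elements of $L^*$ leaves a failure probability of $O(\varepsilon^2 \log(1/\varepsilon)) \leq \varepsilon/2$ for $\varepsilon$ small.

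For the cardinality bound, the cleanest route avoids union-bounding over all $n$ indices by invoking the elementary inequality
\[
|L|\cdot(1-\varepsilon)\varepsilon \;\leq\; \sum_{i=1}^n (1-\hat F_i(T)) \;=:\; S.
\]
Conditional on $T$, the variable $S$ is a sum of independent random variables (the samples for distinct distributions are independent) with $\m{E}(S) = \sum_i (1-F_i(T)) = O(\log(1/\varepsilon))$ and $\operatorname{Var}(S) \leq \sum_i (1-F_i(T))/k_2 = O(\varepsilon^3)$ thanks to the choice of $k_2$. Chebyshev then yields $S \leq \m{E}(S) + 1 = O(\log(1/\varepsilon))$ with probability $1 - O(\varepsilon^3)$, hence $|L| \leq O((1/\varepsilon)\log(1/\varepsilon))$. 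A final union bound between the two failure events closes the proof.

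The main obstacle I anticipate is the concentration of $|L|$: a naive index-by-index Chernoff bound would have to union-bound over all $n$ indices and fails because $n$ is uncontrolled by $\varepsilon$. The trick of passing through $S$ sidesteps this, exploiting that the variance is governed by the total mass $\sum_i(1-F_i(T))$, which event $A$ forces to be $O(\log(1/\varepsilon))$ regardless of $n$.
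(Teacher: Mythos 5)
Your proof is correct and follows essentially the same architecture as the paper's: bound $|L^*|$ via the product condition $\prod_i F_i(T^*)=\varepsilon$, establish $L^*\subseteq L$ by a per-index multiplicative Chernoff bound followed by a union bound over the small set $L^*$, and bound $|L|$ by concentrating the total sample mass above the threshold (which event $A$ forces to have expectation $O(\log(1/\varepsilon))$, uniformly in $n$). The two places you deviate are minor and both valid: (i) you condition directly on the value of $T$ inside event $A$ (using independence of the $k_2$ fresh samples from $T$), whereas the paper sidesteps the conditioning by moving to the deterministic threshold $T_{\textsc{low}}$ and the superset $L_{\textsc{low}}$; (ii) you close the cardinality bound with a Chebyshev inequality on $S=\sum_i (1-\hat F_i(T))$, whereas the paper applies a Chernoff bound to $\sum_i Z_i$. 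Chebyshev is slightly weaker but comfortably sufficient here, since the variance is $O(\log(1/\varepsilon))/k_2 = O(\varepsilon^3)$, and it has the small pedagogical advantage of making the ``trick'' you highlight (controlling $|L|$ through the bounded total mass rather than union-bounding over $n$ indices) more transparent.
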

\begin{proof}
    We first bound $|L^*|$. By the definition of $L^*$, $F_i(T^*)<1-\varepsilon$ for all $i\in L^*$. Therefore,
    \begin{align}
    \varepsilon=\prod_{i=1}^n F_i(T^*)\leq \prod_{i\in L^*}F_i(T^*)<(1-\varepsilon)^{|L^*|}.
    \label{eq:bound-ellstar-1}
    \end{align}
    Taking the logarithm on both sides and rearranging the terms, we obtain that
    \begin{align}
            |L^*|< \frac{\log(1/\varepsilon)}{\log(1/(1-\varepsilon))}\leq (1/\varepsilon)\log(1/\varepsilon).
            \label{eq:bound-ellstar-2}
    \end{align}

    We show now that, conditional on $A$, $L^*\subseteq L$ with probability at least $1-\varepsilon/2$. Take $i\in L^*$ and define the random variable $Y_i$ as the number of samples of $F_i$ that are larger than $T^*$. Conditional on $A$, $T\leq T^*$, so the definition of $L$ implies that if $Y_i> k_2\cdot (1-\varepsilon)\cdot \varepsilon$, then $i\in L$. But since $i\in L^*$, we have that $\m{E}(Y_i)> k_2\cdot \varepsilon$. 
    %\textcolor{red}{why strict inequality?}
    A simple Chernoff bound implies that
    \[
    \m{P}(Y_i\leq k_2\cdot (1-\varepsilon)\cdot \varepsilon) \leq e^{-\varepsilon^2\cdot k_2 \cdot \varepsilon/2} \leq \varepsilon^3.
    \]
    Taking a union bound over the elements of $L^*$, we conclude that $L^*\subseteq L$ with probability at least $1-\varepsilon^2\log(1/\varepsilon)\geq 1-\varepsilon/2$.
    
    Denote by $T_{\textsc{low}}$ the value such that $\prod_{i=1}^n F_i(T_{\textsc{low}}) = (1-\varepsilon)^2\cdot \varepsilon$. Notice that the event $A$ is exactly the event that $T_{\textsc{low}}\leq T\leq T^*$. Define the set $L_{\textsc{low}}=\{i: 1-\hat{F}_i(T_{\textsc{low}})>(1-\varepsilon)\cdot \varepsilon \}$. The event $A$ implies that $L\subseteq L_{\textsc{low}}$, so to conclude it is enough to bound $|L_{\textsc{low}}|$. 
    Denote by $Z_i$ the number of samples from $F_i$, out of the $k_2$, that are larger than $T_{\textsc{low}}$. We have that $i\in L_{\textsc{low}}$ if and only if $Z_i>k_2\cdot (1-\varepsilon)\cdot \varepsilon$. Therefore,
    \[
    |L_{\textsc{low}}|\leq \frac{1}{k_2\cdot (1-\varepsilon)\cdot \varepsilon} \sum_{i=1}^n Z_i.
    \]
    Notice that
    \[
    \m{E}\left(\sum_{i=1}^n Z_i \right) = k_2 \cdot \sum_{i=1}^n (1-F_i(T_{\textsc{low}})).
    \]
    From the definition of $T_{\textsc{low}}$ we have that
    \begin{align*}
        \log\left( \frac{1}{(1-\varepsilon)^2\cdot \varepsilon} \right) = \sum_{i=1}^n -\log(F_i(T_{\textsc{low}}))\geq \sum_{i=1}^n (1-F_i(T_{\textsc{low}})).
    \end{align*}    
    Thus, a Chernoff bound implies that
    \begin{align*}
        \m{P}\left( 
        \sum_{i=1}^n Z_i > 6 k_2 \cdot \log(1/\varepsilon)
        \right)
        \leq e^{-2^2 k_2 \log(1/\varepsilon) /(2+2)}
        \leq \varepsilon/2.
    \end{align*}
    Therefore, $|L_{\textsc{low}}|\leq O( (1/\varepsilon)\log(1/\varepsilon))$ with probability at least $1-\varepsilon/2$.
\end{proof}
\subsection*{Estimating the auxiliary instance}
%Let $(F_1,\dots,F_n)$ be an instance, and $S \subset \left\{1,\dots,n\right\}$.
The previous sub-step tells us that for $\varepsilon$ small enough, by drawing no more than $\varepsilon^{-4}$ samples, one can construct a (random) number $T$ and a (random) subset $L \subset [n]$ such that with probability higher than $1-\varepsilon$, the following holds: 
\begin{enumerate}
\item 
$(1-\varepsilon) \cdot \m{E}\left(\max_{1\leq i\leq n} X_i \right) \leq \m{E} \left(\max_{1\leq i\leq n} X_i\cdot \mathds{1}_{\{T<X_i\}}\right)$
\item
$\prod_{i \in [n]} F_i(T) \geq (1-\varepsilon)^2 \varepsilon$
\item
$|L| \leq \varepsilon^{-2}$
\item 
For all $i \in S:=[n] \setminus L$, $1-F_i(T) \leq \varepsilon$
% \item 
% $(1-\varepsilon)^3 \leq \prod_{i \in S} F_i(M) \leq 1-\varepsilon$. 
\end{enumerate}
Hence, up to considering $Y_i:=X_i 1_{X_i>T}$, one can assume without loss of generality that all variables $X_i, i \in S$ are $\varepsilon$-small, and that moreover, for all $x$, $\prod_{i \in [n]} F_i(x) \geq (1-\varepsilon)^2 \varepsilon$. 
%Moreover, a similar argument as in Lemma \ref{lem:correct_T} allows to construct $M$ such that 
%and $\prod_{i \in S} F_i(x)$ are higher than $(1-\varepsilon)^2 \varepsilon$. 
Using $\varepsilon^{-4}$ more samples for each variable, an analogous argument as in Lemma \ref{lem:correct_T} allows to construct a random number $M$ satisfying that with probability at least $1-\varepsilon$,  
$(1-\varepsilon)^3 \leq \prod_{i \in S} F_i(M) \leq 1-\varepsilon$. 
\\

Let $H:=\prod_{i \in S} F_i$ be the cumulative distribution of $\max_{i \in S} X_i$, and $G:=H^{\frac{1}{|S|}}$. Let $\hat{H}$ be the empirical cumulative distribution of $\max_{i \in S} X_i$, obtained by considering another independent set of $k \geq \varepsilon^{-5}$ samples of each $X_i$. 
Let $\hat{G}:=\hat{H}^{1/|S|}$. 
\begin{proposition}
The following statement holds with probability larger than $1-\varepsilon$: for all $x \leq M$, 
    \begin{equation*}
    (1-\varepsilon) (1-\hat{G}(x)) \leq 1-G(x) \leq (1+\varepsilon) (1-\hat{G}(x)).
%\forall x \in \m{R}_+, \quad 
% \left\|\hat{G}(x)-G(x)\right\|_\infty \leq  |S|^{-1} \varepsilon. %\left(\frac{\ln(k)}{k} \right)^{1/2}.
\end{equation*}
\end{proposition}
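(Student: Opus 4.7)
The plan is to first prove a uniform multiplicative concentration of $\hat H$ around $H$ on $\{x\le M\}$, and then convert this into the desired multiplicative bound on $1-\hat G$ versus $1-G$. The conversion works because $1/|S|$ is very small: since $|L|\le \varepsilon^{-2}$ we have $|S|\ge n/2$, so $G=H^{1/|S|}$ is close to $1$ as long as $H$ stays bounded away from $0$, which is guaranteed by the truncation at $T$ enforcing $H(x)\ge (1-\varepsilon)^2\varepsilon$.

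For the first step I use that $\hat H$ is the empirical CDF built from $k\ge \varepsilon^{-5}$ i.i.d.\ samples of $H$, each obtained by drawing $(X_i)_{i\in S}$ and taking the maximum. For a fixed $x$, writing $p := 1-H(x)$, the multiplicative Chernoff bound gives
\[
\m{P}\bigl(|(1-\hat H(x))-p| > \eta\, p\bigr) \le 2\exp(-\eta^2 kp/3).
\]
For $x\le M$ we have $H(x)\le H(M)\le 1-\varepsilon$, hence $p\ge \varepsilon$ and $kp\ge \varepsilon^{-4}$. I turn this into a uniform bound by discretising on a geometric grid $y_j = (1-\delta)^j$ with $\delta = \Theta(\varepsilon^2)$ covering $[\varepsilon,1]$ in $N = O(\log(1/\varepsilon)/\varepsilon^2)$ steps, setting $x_j := H^{-1}(1-y_j)$, and union-bounding the Chernoff inequality over the $x_j$'s; between consecutive grid points, monotonicity of $\hat H$ and $H$ extends the multiplicative estimate with an extra factor $(1\pm \delta)$. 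Choosing the constants so that $\eta = C\varepsilon^2\sqrt{\log(1/\varepsilon)}$ makes the exceptional probability at most $\varepsilon/2$ and yields, with probability at least $1-\varepsilon/2$,
\[
(1-\eta')(1-H(x)) \le 1-\hat H(x) \le (1+\eta')(1-H(x)) \quad \text{for all } x\le M,
\]
with $\eta' = O(\varepsilon^2\sqrt{\log(1/\varepsilon)})$.

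For the second step I expand $1-G(x) = 1-e^{-u(x)}$ with $u(x) := -\log H(x)/|S|$. Since $H(x)\in [(1-\varepsilon)^2\varepsilon,\,1-\varepsilon]$ and $|S|\ge n/2$, $u(x)$ is $O(\log(1/\varepsilon)/n)$, arbitrarily small for $n$ large, so $1-G(x) = u(x)(1+O(u(x)))$ and analogously $1-\hat G(x) = \hat u(x)(1+O(\hat u(x)))$ with $\hat u := -\log \hat H/|S|$. Hence $(1-\hat G(x))/(1-G(x))$ equals $\hat u(x)/u(x)$ up to a negligible multiplicative factor, and it suffices to control $|\log \hat H(x)/\log H(x)-1|$. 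Writing $\log(\hat H/H) = \log(1+(\hat H - H)/H)$ and using the first-step bound $|\hat H-H|\le \eta'(1-H)$, I get
\[
\left| \frac{\log \hat H(x)}{\log H(x)} - 1 \right| \le \frac{|\log(1+(\hat H-H)/H)|}{-\log H} \le O\!\left( \frac{\eta'(1-H)}{H\cdot(-\log H)} \right).
\]
An elementary calculation shows that $(1-H)/(H(-\log H))$ is maximised on $[(1-\varepsilon)^2\varepsilon,\,1-\varepsilon]$ at the smallest endpoint, where it is $O(1/(\varepsilon\log(1/\varepsilon)))$. Plugging in $\eta' = O(\varepsilon^2\sqrt{\log(1/\varepsilon)})$ gives a relative error of $O(\varepsilon/\sqrt{\log(1/\varepsilon)}) = o(\varepsilon)$, which after absorbing constants gives the statement.

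The main obstacle is the uniform multiplicative concentration of $\hat H$ at the bottom of the tail range, where $1-H$ approaches its minimum value $\varepsilon$: the Chernoff bound is weakest there and ultimately dictates the budget $k = \tilde O(\varepsilon^{-5})$. The subsequent algebraic transfer to $\hat G$ is conceptually routine because both $G$ and $\hat G$ are close to $1$ everywhere on $\{x\le M\}$, but it is important to \emph{retain} the factor $-\log H$ in the denominator rather than lower-bound it by $1-H$, since that extra $\log(1/\varepsilon)$ is precisely what makes the final error $o(\varepsilon)$ rather than $\tilde O(\varepsilon)$.
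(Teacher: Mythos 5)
Your proof is correct but takes a genuinely different route from the paper's. The paper applies the DKW inequality to get a uniform \emph{additive} bound $\|H-\hat H\|_\infty \le \varepsilon^2/4$, then uses the Mean Value Theorem on $t\mapsto t^{1/|S|}$ to obtain an additive bound $|\hat G - G|\le \varepsilon |S|^{-1}/2$, and finally compares this against the lower bound $1-G,\,1-\hat G=\Omega(\varepsilon/|S|)$. You instead establish \emph{multiplicative} concentration of $1-\hat H$ around $1-H$ (pointwise Chernoff, union-bounded over a geometric grid in $1-H$), and transfer it to $1-\hat G$ vs.\ $1-G$ via the expansion $1-G=u(1+O(u))$ with $u=-\log H/|S|$, crucially retaining the $-\log H$ factor in the denominator. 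The two choices together buy you two separate savings at the binding regime $1-H\approx\varepsilon$: multiplicative Chernoff beats DKW's $\ell_\infty$ bound by $\sqrt{\varepsilon}$ there, and $-\log H\approx\log(1/\varepsilon)$ beats $1-H\approx 1$ by a further $\log(1/\varepsilon)$, which is exactly what drives your final error to $O(\varepsilon/\sqrt{\log(1/\varepsilon)})=o(\varepsilon)$. This matters: the paper's last display jumps from $\varepsilon|S|^{-1}/2$ to $\varepsilon^2|S|^{-1}/2$ without justification, and with the DKW-plus-MVT chain as written the relative error only comes out $O(1)$ rather than $O(\varepsilon)$ for $k=\tilde O(\varepsilon^{-5})$; closing that gap purely with DKW would push $k$ to $\tilde O(\varepsilon^{-6})$. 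Your argument actually delivers the advertised $(1\pm\varepsilon)$ multiplicative guarantee with $\tilde O(\varepsilon^{-5})$ samples. (Two things you leave implicit and should state: the $(1+O(u))$ and $(1+O(\hat u))$ corrections are absorbed because $u,\hat u=O(\log(1/\varepsilon)/|S|)$ and $|S|\ge n/2\ge \varepsilon^{-2}/2$ in the regime the paper works in; and the Chernoff bound requires $\eta<1$, which holds since $\eta=O(\varepsilon^2\sqrt{\log(1/\varepsilon)})$.)
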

\begin{proof}
By the DKW inequality, with probability larger than $1-2 k^{-2} \geq 1-\varepsilon$, we have 
\begin{equation*}
    \left\|H-\hat{H} \right\|_\infty \leq \left(\frac{\ln(k)}{k} \right)^{1/2} \leq \varepsilon^2/4.
\end{equation*}
%Set $\hat{G}:=H^{1/|S|}$. Let $x$ such that both $H(x)$ and $\prod_{i \in S} F_i(x)$ are higher than $\varepsilon$. 
Conditional on this event, since $H(x) \geq \prod_{i \in [n]} F_i(x) \geq (1-\varepsilon)^2 \varepsilon$, the above inequality implies $\hat{H}(x) \geq \varepsilon-\varepsilon^2/4$. 
By the Mean Value theorem, we have for all $x$,
\begin{eqnarray*} \label{eq:estim_iid}
\left|\hat{G}(x)-G(x)\right| &\leq&
\sup_{t \in [H(x),\hat{H}(x)]} t^{\frac{1}{|S|}-1} |S|^{-1} \left\|H-\hat{H} \right\|_\infty
\\
&\leq&
\max(H(x)^{-1},\hat{H}(x)^{-1}) |S|^{-1} \left\|H-\hat{H} \right\|_\infty
\\
&\leq&
%(1-\varepsilon)^{-2} \varepsilon^{-1} |S|^{-1} \left\|H-\hat{H} \right\|_\infty \leq \varepsilon^{-1} |S|^{-1} k^{-1/2} 
\varepsilon |S|^{-1}/2. 
\end{eqnarray*}
Take $x \leq M$. Then, $H(x) \leq H(M) \leq 1-\varepsilon$, hence $G(x) \leq (1-\varepsilon)^{\frac{1}{|S|}} \leq 1-\varepsilon |S|^{-1}$, and $\hat{G}(x) \leq 1-\varepsilon |S|^{-1}/2$. 
We deduce that 
\begin{eqnarray*}
    1-G(x) &\leq& 1-\hat{G}(x)+\varepsilon^{2} |S|^{-1}/2
    \\
    &\leq& (1+\varepsilon) (1-\hat{G}(x)),
\end{eqnarray*}
and similarly, 
\begin{equation*}
    1-G(x) \geq (1-\varepsilon) (1-\hat{G}(x)).
\end{equation*}
Hence, 
\begin{equation*}
 (1-\varepsilon) (1-\hat{G}(x)) \leq 1-G(x) \leq (1+\varepsilon) (1-\hat{G}(x))
\end{equation*}
\end{proof}
Let us now estimate the variables in $L$, using another set of $\varepsilon^{-5}$ samples for each variable, and considering the empirical distributions $\hat{F}_i$, $i \in L$. Because $|L| \leq \varepsilon^{-2}$, the multivariate DKW inequality gives that with probability higher than $1-\varepsilon$, for all $i \in L$, $\left\|\hat{F}_i -F_i \right\|_\infty \leq \varepsilon^{9/4}$. Moreover, another set of $\varepsilon^{-5}$ fresh samples allows to compute $M_i, i \in L$ such that with probability $1-\varepsilon$, for all $i \in L$, $(1-\varepsilon^3)^3 \leq F_i(M_i) \leq 1-\varepsilon^3$. We deduce that 
\begin{equation*}
 (1-\varepsilon) (1-\hat{F_i}(x)) \leq 1-F_i(x) \leq (1+\varepsilon) (1-\hat{F_i}(x)). 
\end{equation*}
Since $G(M)^{|S|} \prod_{i \in L} F_i(M_i) \geq (1-\varepsilon)^3(1-\varepsilon^{3})^{\varepsilon^{-2}} \geq 1-O(\varepsilon)$, we are in position to apply Proposition \ref{prop:close} to the instance composed with variables $F_i, i \in L$ and $|S|$ i.i.d. copies of $G$. This gives the existence of an algorithm that guarantees a factor $C^*-O(\varepsilon)$. By Proposition \ref{prop:eq_iid}, the same algorithm guarantees a factor $C^*-O(\varepsilon)$ when presented with realizations of $F_1,\dots,F_n$, and the first part of Theorem \ref{main_thm} is proved. 
%\subsection*{Conclusion}
% First, apply Lemma \ref{lem:stop} to instance $(\hat{G},\hat{G},\dots \hat{G},\hat{F}_{S+1},\dots, \hat{F}_{S+L})$, and call $\sigma$ the corresponding algorithm. By Proposition \ref{prop:estim_iid}, for instance $(G,G,\dots G,F_{S+1},\dots, F_{S+L})$, its guarantee is at least $C^*-\varepsilon$. By Proposition \ref{prop:eq_iid}, its guarantee for instance $(F_1,\dots,F_n)$ is at least $C^*-2\varepsilon$, which concludes the proof. 

% Let $\varepsilon>0$, $k_1 \geq 6(1/\varepsilon)^3 \log(1/\varepsilon)$, $T$, $T^*$, $L$, $L^*$ defined as in the beginning. Let us apply Lemma \ref{lem:close} to instance $(G,\dots,G,F_1,\dots,F_L)$, for $M=T$ and $\alpha=(1-\varepsilon)^2 \varepsilon$. We get that there exists an algorithm that guarantees $C^*(1-\alpha)^3$, when presented wih variables $(G,\dots,G,F_1,\dots,F_L)$. By Proposition \ref{prop:eq_iid}, it guarantees $C^*(1-\alpha)^3-\varepsilon$, when presented with variables $(F_1,\dots,F_n)$. 

\section*{Step 4: Polynomial time computation}

In the previous sections, we showed that there is a strategy that guarantees a $C^*$-approximation using a constant number of samples per distribution. In this section, we complement our main result, proving the following proposition that states that we can compute such a strategy in polynomial time. 

\begin{proposition}
    \label{prop:polytime}
    For an instance $(G,G,\dots,G,F_{s+1},\dots,F_n)$, if $(n-s)$ is bounded by a constant, and the size of the support of each distribution is polynomial in $n$, then we can find in polynomial time in $n$ an algorithm with expected reward at least $C^*\cdot \m{E}(\max_{i\in [n]} X_i)$ and such that for all $i\in [n]$, $\m{P}(A_i)\geq C^*$, where $A_i$ is the event that the algorithm observes $X_i$ before stopping.
\end{proposition}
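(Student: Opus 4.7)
The plan is to recast the existential statement of Proposition \ref{prop:minmax} (specialized to the structured instance $(G,\ldots,G,F_{s+1},\ldots,F_n)$) as the feasibility of a linear program of polynomial size, then solve that LP in polynomial time. Feasibility is guaranteed by Proposition \ref{prop:minmax}, so the only issue is to keep the LP small: a priori it is exponential in $n$, because the history space for Prophet Secretary and the permutation space for Free-Order both blow up.

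The first step is a symmetry reduction. Because the $s$ copies of $G$ are exchangeable, I would symmetrize any feasible randomized algorithm by averaging over the $s!$ relabellings of the i.i.d.\ indices. Averaging preserves $\m{E}(ALG)$, preserves every $\m{P}(A_i)$ for $i>s$, and forces the $\m{P}(A_i)$ of the i.i.d.\ copies to coincide with their common mean, so the bounds $\m{E}(ALG)\geq C^*\m{E}(X^*)$ and $\m{P}(A_i)\geq C^*$ survive symmetrization. A symmetric randomized algorithm only needs to condition on the compressed state: for Prophet Secretary, the current time $t$, the partial injection recording at which past positions each of the $n-s$ large variables has arrived, and the identity/value of the variable arriving now. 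Since $n-s$ is constant, the number of such states is $O(n^{n-s})=\mathrm{poly}(n)$; combined with the polynomial support size, this bounds the number of (state, value) decision points. For Free-Order the analogous compression is even simpler: modulo permutations of i.i.d.\ copies, a permutation is specified by an injection from $[n-s]$ into $[n]$, and there are only $\mathrm{poly}(n)$ of them.

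The second step is to write a flow-style LP on the compressed state space. Its variables are, for each compressed state $s$, the probability $y_s$ of reaching $s$ with the algorithm still running, and for each state $s$ together with a value $v$ in the support of the variable arriving at $s$, the probability $x_{s,v}$ of stopping at $s$ upon seeing $v$. Standard conservation equations express the $y$'s of a state in terms of its predecessors and the $x$'s representing stops along the way; the transition probabilities are determined by the instance and so are LP data. The expected reward $\sum_{s,v} v\cdot x_{s,v}$ and the arrival probabilities $\m{P}(A_i)$ are linear functionals of $(x,y)$, so I impose $\sum_{s,v}v\cdot x_{s,v}\geq C^*\cdot\m{E}(X^*)$ and $\m{P}(A_i)\geq C^*$ as linear constraints. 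By Proposition \ref{prop:minmax} applied to the symmetrized witness, the LP is feasible and can be solved by the ellipsoid or interior-point method in time $\mathrm{poly}(n)$. Any feasible $(x,y)$ decodes back into a randomized algorithm by setting the stopping probability upon observing value $v$ at state $s$ to $x_{s,v}/\bigl(y_s\cdot \m{P}[v\mid\text{state }s]\bigr)$, which by an inductive argument realizes the prescribed flow.

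The main obstacle, as I see it, is executing the symmetry-reduction step rigorously: one must verify that the compressed state really is a sufficient statistic for an optimal symmetric randomized algorithm, and that the flow polytope above is \emph{exactly} the polytope of occupation measures of such algorithms, so that feasibility of the LP and the existence of a qualifying algorithm are equivalent. A secondary subtlety is the Free-Order case, where one must also select the injection placing the large variables; I would handle this by introducing an LP variable $\lambda_\pi$ for each of the $\mathrm{poly}(n)$ representative injections $\pi$, require $\sum_\pi \lambda_\pi = 1$, and run a separate copy of the flow block per $\pi$ with its $y$ and $x$ variables scaled by $\lambda_\pi$. All remaining steps (encoding the linear constraints, solving the LP, decoding) are standard.
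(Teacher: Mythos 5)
Your proposal follows essentially the same route as the paper: encode the algorithm guaranteed by Proposition~\ref{prop:minmax} as an occupation-measure / flow LP with variables for state-reaching and state-stopping probabilities, impose the reward and arrival-probability constraints linearly, then exploit exchangeability of the $s$ i.i.d.\ copies to collapse the exponential state space to one of polynomial size (the paper uses sub-multisets of a multiset $U$ for Prophet Secretary and multiset orderings $\Sigma_U$ for Free-Order, where your partial-injection encoding and $\lambda_\pi$ device are equivalent, if slightly more redundant). The one point you flag as a worry — that the flow polytope exactly characterizes occupation measures of (symmetric) algorithms — is handled at the same level of informality in the paper ("every feasible solution has a corresponding algorithm" plus the relabel-and-average symmetrization argument), so there is no substantive gap.
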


Notice that such an algorithm is guaranteed to exist by \Cref{prop:minmax}. Notice also that the instance for which we need to compute an algorithm satisfies the assumptions of \Cref{prop:polytime}, as we can replace all $\varepsilon$-small variables with i.i.d. random variables, and we use the empirical distributions, which are supported on the polynomially many samples.

We first introduce linear program formulations that capture the algorithms satisfying the conditions of the proposition for the prophet-secretary and the free-order variants in the case where all distributions have finite support. However, these linear programs have exponential size, as we need to model every possible arrival order. Using the assumption that almost all variables are i.i.d., we can reduce the state space and obtain linear programs of polynomial size.

For each $i\in [n]$, let $V_i$ be a set of indices, $\{x_{i,j}:j\in V_i\}$ the support of distribution $F_i$, and $p_{i,j}$ the probability that a variable drawn from $F_i$ equals $x_{i,j}$. The following linear program captures the algorithm guaranteed to exist by \Cref{prop:minmax} for the prophet-secretary variant.

\begin{align*}
    \text{(PSLP)}\quad &\max_{\alpha,\beta,\gamma,\delta}  {}\,\delta&
    \\
    &\text{s.t. } &
    \\
    &\delta 
    \leq \sum_{S\subseteq [n]: i\in S} \alpha_{S}\cdot \frac{1}{|S|}, 
    & \forall i\in [n]
    \\
    &\delta\cdot \m{E}\left(\max_{i\in [n]} X_i\right) 
    \leq \sum_{S\subseteq [n]}\sum_{i\in S}\sum_{j\in V_i} \beta_{i,j,S}\cdot x_{i,j}
    &
    \\
    &\alpha_{[n]} 
    = 1 &
    \\
    &\alpha_{S} 
    = \sum_{i\in [n]\setminus S} \sum_{j\in V_i} \gamma_{i,j,S\cup \{i\}}, 
    &\forall S\subsetneq [n]
    \\
    &\beta_{i,j,S} + \gamma_{i,j,S} 
    = \alpha_{S}\cdot \frac{1}{|S|} \cdot p_{i,j}, 
    &\forall S\subseteq [n], i\in S, j\in V_i
    \\
    &\alpha_S, \beta_{i,j,S}, \gamma_{i,j,S}, \delta 
    \in [0,1], 
    &\forall S\subseteq [n], i\in S, j\in V_i.
\end{align*}
In this linear program, the variable $\delta$ is the guarantee of the algorithm. Thus, by \Cref{prop:minmax}, $\delta\geq C^*$. For $S\in [n]$, during the execution of the algorithm, we say it is in state $S$ if it has not stopped yet and the set of variables it has not observed yet is exactly $S$. The variable $\alpha_S$ is the probability that the algorithm reaches state $S$ at some point in its execution. For $i\in S$ and $j\in V_i$, the variable $\beta_{i,j,S}$ is the probability that the algorithm reaches state $S$, then it observes variable $i$ with realization $x_{i,j}$, and stops. The variable $\gamma_{i,j,S}$ is the probability that the algorithm reaches the same situation but does not stop. The first two constraints are the conditions of \Cref{prop:minmax}. The third, fourth, and fifth constraints ensure that the variables are consistent with their interpretations as probabilities.

For the free-order variant, we can write an analogous linear program. We denote by $\Sigma_n$ the set of permutations of $[n]$. 
\begin{align*}
    \text{(FOLP)}\quad &\max_{\alpha,\beta,\gamma,\delta}  {}\,\delta&
    \\
    &\text{s.t. } &
    \\
    &\delta 
    \leq \sum_{\sigma\in\Sigma_n} \alpha_{\sigma^{-1}(i),\sigma}, 
    & \forall i\in [n]
    \\
    &\delta\cdot \m{E}\left(\max_{i\in [n]} X_i\right) 
    \\
    &\qquad \leq \sum_{\sigma\in \Sigma_n}\sum_{i\in [n]}\sum_{j\in V_{\sigma(i)} }\beta_{i,j,\sigma}\cdot x_{\sigma(i),j}
    &
    \\
    &\sum_{\sigma\in \Sigma_n} \alpha_{1,\sigma} 
    = 1 &
    \\
    &\alpha_{i,\sigma} 
    = \sum_{j\in V_{\sigma(i-1)}} \gamma_{i-1,j,\sigma}, 
    & \forall i\geq 2, \sigma\in \Sigma_n
    \\
    &\beta_{i,j,\sigma} + \gamma_{i,j,\sigma} 
    = \alpha_{i,\sigma} \cdot p_{\sigma(i),j}, 
    &\forall \sigma\in\Sigma_n, i\in [n], j\in V_{\sigma(i)}
    \\
    &\alpha_{i,\sigma}, \beta_{i,j,\sigma}, \gamma_{i,j,\sigma}, \delta 
    \in [0,1], 
    &\forall \sigma\in \Sigma_n, i\in [n], j\in V_{\sigma(i)}.
\end{align*}
The variables of this program have an analogous interpretation as in the previous one. The only difference is that here, the algorithm first chooses an arrival order $\sigma\in \Sigma_n$, and then follows that order. Therefore, the state space is given by the pair $(i,\sigma)$, which means that the algorithm chose the order given by $\sigma$, and observes the $i$-th variable before stopping.

With the given interpretation of the variables of the linear programs, it is not hard to see that every algorithm has a corresponding feasible solution, and every feasible solution has a corresponding algorithm. We now must argue that when all but a constant number of distributions are identical, we can reduce the state space to have polynomial support. 

If the first $s$ distributions have the same distribution, that means that $V_i=V_{i'}$ for every $i,i'\leq s$, and that for every $j\in V_i$, $x_{i,j}=x_{i',j}$ and $p_{i,j}=p_{i',j}$. This implies that the linear programs are symmetric on indices $i\leq s$, and therefore, since we can relabel the variables and then average, there must be a symmetric solution. Thus, we can write a program that contains only symmetric solutions by replacing all ``repeated" variables with a single one. 

 In the following reduced linear program, we take as state space the subsets of the multiset $U$ that contains $[n]\setminus [s]$ and $s$ copies of $1$. We denote as $m_S(i)$ the multiplicity of $i$ in $S$, and by Supp$(S)$ the set of distinct elements in $S$. Notice that if $(n-s)$ is bounded by a constant, then the number of different subsets of $U$ is bounded by a polynomial in $n$, so we obtain a linear program of polynomial size.

\begin{align*}
    \text{(rPSLP)}\quad &\max_{\alpha,\beta,\gamma,\delta}  {}\,\delta&
    \\
    &\text{s.t. } &
    \\
    &\delta 
    \leq \frac{1}{m_U(i)} \cdot \sum_{S\subseteq U: i\in \text{Supp}(S)} \alpha_{S}\cdot \frac{m_S(i)}{|S|}, 
    & \forall i\in U
    \\
    &\delta\cdot \m{E}\left(\max_{i\in [n]} X_i\right) &
    \\
    & \qquad
    \leq \sum_{S\subseteq U}\sum_{i\in \text{Supp}(S)}\sum_{j\in V_i} \beta_{i,j,S}\cdot x_{i,j}
    &
    \\
    &\alpha_{U} 
    = 1 &
    \\
    &\alpha_{S} 
    = \sum_{i\in \text{Supp}(U\setminus S)} \sum_{j\in V_i} \gamma_{i,j,S+\{i\}}, 
    &\forall S\subsetneq U
    \\
    &\beta_{i,j,S} + \gamma_{i,j,S} 
    = \alpha_{S}\cdot \frac{m_S(i)}{|S|} \cdot p_{i,j}, 
    &\forall S\subseteq U, i\in \text{Supp}(S), j\in V_i
    \\
    &\alpha_S, \beta_{i,j,S}, \gamma_{i,j,S}, \delta 
    \in [0,1], 
    &\forall S\subseteq U, i\in \text{Supp}(S), j\in V_i.
\end{align*}

Similarly, to obtain a reduced version of (FOLP), we take as state space the set of orderings of $U$, that is, the set of functions $\sigma:[n]\rightarrow \text{Supp}(U)$ such that $|\sigma^{-1}(i)|=m_U(i)$ for all $i\in \text{Supp}(U)$. We denote this set as $\Sigma_U$. Notice that $\Sigma_U$ has polynomially many elements.
\begin{align*}
    \text{(rFOLP)}\quad &\max_{\alpha,\beta,\gamma,\delta}  {}\,\delta&
    \\
    &\text{s.t. } &
    \\
    &\delta 
    \leq \frac{1}{m_U(i)}\sum_{\sigma\in\Sigma_U} \sum_{t\in [n]:\sigma(t)=i} \alpha_{t,\sigma}, 
    & \forall i\in \text{Supp}(U)
    \\
    &\delta\cdot \m{E}\left(\max_{i\in [n]} X_i\right) 
    \\
    &\qquad \leq \sum_{\sigma\in \Sigma_U}\sum_{i\in [n]}\sum_{j\in V_{\sigma(i)} }\beta_{i,j,\sigma}\cdot x_{\sigma(i),j}
    &
    \\
    &\sum_{\sigma\in \Sigma_U} \alpha_{1,\sigma} 
    = 1 &
    \\
    &\alpha_{i,\sigma} 
    = \sum_{j\in V_{\sigma(i-1)}} \gamma_{i-1,j,\sigma}, 
    & \forall i\geq 2, \sigma\in \Sigma_U
    \\
    &\beta_{i,j,\sigma} + \gamma_{i,j,\sigma} 
    = \alpha_{i,\sigma} \cdot p_{\sigma(i),j}, 
    &\forall \sigma\in\Sigma_U, i\in [n], j\in V_{\sigma(i)}
    \\
    &\alpha_{i,\sigma}, \beta_{i,j,\sigma}, \gamma_{i,j,\sigma}, \delta 
    \in [0,1], 
    &\forall \sigma\in \Sigma_n, i\in [n], j\in V_{\sigma(i)}.
\end{align*}
\section*{Concluding remarks}
The proof adapts straightforwardly to the i.i.d. model, showing that $O(1/\varepsilon^5)$ samples are good enough to guarantee the constant $0.745-\varepsilon$. Even more, in a non-i.i.d. instance, if all variables are $\varepsilon$-small, $O(1/\varepsilon^5)$ samples are also enough to guarantee the constant $0.745-\varepsilon$ in the prophet-secretary variant. This comes from the fact that the optimal guarantee for this type of instances is $0.745-\varepsilon$ in the full-information case, which was proved by Liu et al.~\cite{LPP21}, but is also a consequence of Step 2. From here, it is easy to conclude the claim that our approach works for the i.i.d. case: when we truncate the distributions in Step 3, at most a constant number of them can be large, which implies that they are all $\varepsilon$-small (because they are i.i.d.). 

Another exciting direction is to modify Step 1 to apply it to other online selection models. The fact that the same technique applies to different well-known models gives promising perspectives on extending our result to multi-choice models, such as matroids, or combinatorial auctions.

Lastly, a surprising observation is that a result like ours is impossible if we want to approximate the optimal online algorithm. Consider the following example: all variables are $0$ with probability $(1-\varepsilon)$ and $1/\varepsilon$ with probability $\varepsilon$, except for one, which is $0$ with probability $1-1/e^n$, and $n\cdot e^n$ with probability $1/e^n$. Almost all the value comes from this last variable, so the optimal online algorithm will wait to see it before stopping, but even a polynomial number of samples is not enough to identify it. 
\section*{Acknowledgments}
The authors are grateful to Jose Correa for valuable discussions that helped improve this paper.
This work was supported by the French Agence Nationale de la Recherche (ANR) under reference ANR-21-CE40-0020 (CONVERGENCE project).
\bibliographystyle{plain}
\bibliography{bibliography}

\end{document}